\renewcommand{\arraystretch}{1.3}
\newcommand{\bbR}{\mathbb{R}}      
\def\pdv{\@ifnextchar[{\@pdvwith}{\@pdvwithout}}
\def\@pdvwith[#1]#2{\frac{\partial^{#1}}{\partial {#2}^{#1}}}
\def\@pdvwithout#1{\frac{\partial}{\partial {#1}}}
\def\dv{\@ifnextchar[{\@dvwith}{\@dvwithout}}
\def\@dvwith[#1]#2{\frac{d^{#1}}{d {#2}^{#1}}}
\def\@dvwithout#1{\frac{d}{d {#1}}}
\newcommand{\disth}{\text{dist}_{\mathbb{H}^3}} 
\newcommand{\dists}{\text{dist}_{S^3}} 
\DeclareMathOperator{\cotan}{cotan}
\DeclareMathOperator{\cotanh}{cotanh}
\DeclareMathOperator{\sech}{sech}
\DeclareMathOperator{\cosech}{cosech}
\DeclareMathOperator{\arctanh}{arctanh}
\newtheorem{Thm}{Theorem}[section]
\newtheorem{Conjecture}[Thm]{Conjecture}
\theoremstyle{definition}
\begin{document}
\title{Explicit formulas and decay rates for the solution of the wave equation in cosmological spacetimes}
\author{Jos\'e Nat\'ario$^{1}$ and Flavio Rossetti$^{1, 2}$\\ \\
{\small $^1$ CAMGSD, Departamento de Matem\'{a}tica, Instituto Superior T\'{e}cnico,}\\
{\small Universidade de Lisboa, Portugal}\\
{\small $^2$ Department of Physics, LMU Munich, Germany}
}
\date{}
\maketitle
\begin{abstract}
We obtain explicit formulas for the solution of the wave equation in certain Friedmann-\allowbreak Lema\^{i}tre-\allowbreak Robertson-\allowbreak Walker (FLRW) spacetimes. Our method, pioneered by Klainerman and Sarnak, consists in finding differential operators that map solutions of the wave equation in these FLRW spacetimes to solutions of the conformally invariant wave equation in simpler, ultra-static spacetimes, for which spherical mean formulas are available. In addition to recovering the formulas for the dust-filled flat and hyperbolic FLRW spacetimes originally derived by Klainerman and Sarnak, and generalizing them to the spherical case, we obtain new formulas for the radiation-filled FLRW spacetimes and also for the de Sitter, anti-de Sitter and Milne universes. We use these formulas to study the solutions with respect to the Huygens principle and the decay rates, and to formulate conjectures about the general decay rates in flat and hyperbolic FLRW spacetimes. The positive resolution of the conjecture in the flat case is seen to follow from known results in the literature.
\end{abstract}
\tableofcontents
%
%
%
\section{Introduction}\label{section0}
The aim of this article is to obtain explicit formulas and exact decay rates for the solution of the wave equation in certain Friedmann-\allowbreak Lema\^{i}tre-\allowbreak Robertson-\allowbreak Walker (FLRW) spacetimes, taken as fixed backgrounds. Besides its intrinsic interest in modelling electromagnetic or gravitational waves, the linear wave equation may be considered as a first step to understand the qualitative behavior of solutions of the Einstein equations, which usually requires a good quantitative grasp of the linearized Einstein equations. 

Explicit formulas, albeit applying only to particular FLRW models, are very useful in that they provide detailed information about the solutions, and may indeed inspire conjectures about the behavior of solutions of the wave equation in other spacetimes (see section~\ref{section6}). Moreover, they highlight special properties of the solutions of the wave equation in the particular FLRW models where they do apply, such as (some form of) the Huygens principle.

The wave equation in cosmological spacetimes has been amply studied in the literature, see for example \cite{AbbasiCraig, CostaNatarioOliveira, DafermosRodnianski1, Gajic, GalstianKinoshitaYagdjian, GalstianYagdjian, GiraoNatarioSilva, KlainermanSarnak, Ringstrom1, Ringstrom2, Ringstrom3, Ringstrom4, Schlue, YagdjianGalstian} and references therein. Our own contribution is described in more detail below.
\subsection{Klainerman-Sarnak method}
The Klainerman-Sarnak method consists of finding an operator $\hat O$ that maps solutions $\phi$ of the wave equation in a given FLRW background to solutions $\hat O \phi$ of the conformally invariant wave equation on a simpler, ultra-static spacetime (characterized by the same spatial curvature), where a spherical means formula for $\hat O \phi$ is available; by inverting the operator $\hat O$, one then obtains an explicit expression for $\phi$. It should be noted, however, that such operators can only be found for very specific FLRW spacetimes.

 Klainerman and Sarnak introduced this method in \cite{KlainermanSarnak} for the dust-filled FLRW universes with flat and hyperbolic spatial sections. Abbasi and Craig extended their analysis in \cite{AbbasiCraig}, where they elaborated on the decay rates of the solutions of the wave equation and also on the validity of the Huygens principle. Physical applications were later discussed in \cite{CraigReddy, StarkoCraig}.

In this paper, we generalize the Klainerman-Sarnak method to the spherical dust-filled FLRW universe, as well as to a number of other FLRW spacetimes. More precisely, using the parameter $K$ to denote the curvature of the spatial sections (so that $K = 0, -1, 1$ corresponds to flat, hyperbolic or spherical geometries, respectively), we obtain new explicit formulas for the solution of the wave equation in radiation-filled ($K = 0, -1, 1$), de Sitter ($K = 0, -1, 1$), anti-de Sitter ($K=-1$) and Milne ($K=-1$) universes. In each case, we are able to find an operator $\hat O$ such that $\hat O \phi$ satisfies the conformally invariant wave equation in:
\begin{itemize}
\item  {\bf Flat case:} Minkowski's spacetime.
\item {\bf Hyperbolic case:} The ultra-static universe with hyperbolic spatial sections (that is, the hyperbolic analogue of Einstein's universe).
\item {\bf Spherical case:} The ultra-static universe with spherical spatial sections (that is, Einstein's universe).
\end{itemize}

 The well-known fact that Minkowski's spacetime is conformal to a globally hyperbolic region of Einstein's universe makes it plausible that a spherical means formula for the conformally invariant wave equation should exist in this universe. A similar argument applies to its hyperbolic analogue, which can be seen to be conformal to a globally hyperbolic region of Minkowski's spacetime.
\subsection{Huygens principle}
All solutions for which we obtain explicit formulas have the property that $\hat O \phi$ satisfies the strong Huygens principle: the values of $\hat O \phi$ at some spacetime point depend only on the values of the initial data at points on its light cone (that is, information propagates along null geodesics). However, as noted by Abbasi and Craig in \cite{AbbasiCraig}, this is not necessarily the case for $\phi$ itself. In this paper, we show that the strong Huygens principle does hold for the radiation-filled ($K = 0, -1, 1$) and the Milne universes. Moreover, a weaker form of the Huygens principle, dubbed the {\em incomplete Huygens principle} in \cite{Yagdjian}, is satisfied by both the de Sitter ($K = 0, -1, 1$) and the anti-de Sitter universes. 
\subsection{Decay rates}
The explicit formulas derived in this paper give very detailed information about the solutions, such as decay rates as $t \to +\infty$, where $t$ is the physical time coordinate of the FLRW metric (see section~\ref{subsection1.5}). 

In the \textbf{flat case}, it is interesting to notice that the decay rates in both the dust-filled and the radiation-filled universes coincide with the decay rate $t^{-1}$ of waves propagating in Minkowski's spacetime, despite the different behavior of their scale factors ($a(t)=t^\frac23$ and $a(t)=t^\frac12$, respectively). These decay rates differ from those obtained in \cite{CostaNatarioOliveira} for the nonzero Fourier modes in the case of \emph{toroidal} spatial sections, which were, respectively, $t^{-\frac{2}{3}}$ and $t^{-\frac{1}{2}}$. The explanation for this discrepancy is dispersion: whereas in the toroidal models the decay is solely due to the cosmological redshift resulting from the expansion, for noncompact spatial sections the solutions can disperse across an unbounded region. The faster the expansion the larger the decay due to the cosmological redshift, but the smaller the decay due to dispersion (as faraway regions are moving away faster). By adapting the analysis of the damped wave equation in \cite{Wirth}, we show that indeed there exists a regime (corresponding to a slow enough expansion of the spatial sections) where the two effects exactly balance each other to give the same decay rate as in Minkowski's spacetime.
For faster expansion, the cosmological redshift does not fully compensate for the lack of dispersion, and the decay becomes slower, until no decay occurs at all.

In the \textbf{hyperbolic case}, we obtain faster decay rates than in the flat case for both the dust-filled and the radiation-filled universes, as could be expected from the fact that dispersion is more effective for hyperbolic spatial geometry. Based on these two examples, and also on the numerical results obtained in \cite{RossettiVinuales}, we formulate a conjecture for the decay rates in general FLRW spacetimes with hyperbolic sections. We also correct a claim in \cite{AbbasiCraig} by showing that the decay rate in the dust-filled universe is $t^{-\frac{3}{2}}$ rather than $t^{-2}$.

\subsection{Behavior at the Big Bang} \label{subsection1.4}
Generically, solutions of the wave equation blow up at the Big Bang (and also at the Big Crunch, when there is one). Nonetheless, some solutions (e.g.~constant functions) have a well-defined limit at the Big Bang. In the cases of the dust-filled ($K = 0, -1, 1$) and radiation-filled ($K = 0, -1, 1$) universes, we show how to obtain explicit expressions for these solutions. In fact, by taking limits of solutions with appropriate initial data, we give explicit expressions for solutions with any given function as its limit at the Big Bang (see also \cite{GiraoNatarioSilva, Ringstrom1}).
\subsection{Summary of the main results and decay conjectures} \label{subsection1.5}
Recall that a $(3+1)$-dimensional FLRW spacetime $(\bbR \times \Sigma_3, g)$ has metric 
\begin{equation} 
g = -dt^2 + a^2(t) d\varSigma_3^{\,2} \, ,
\end{equation}
where $d\varSigma_3^{\,2}$ is the standard Riemannian metric for $\Sigma_3=\bbR^3, \mathbb{H}^3$ or $S^3$, and $a(t)$ is the scale factor. The \emph{conformal} time coordinate is defined as
\begin{equation}
\tau = \int \frac{dt}{a(t)},
\end{equation}
and we refer to $t$ as the \emph{physical} time coordinate. In what follows we give a precise quantitative summary of the main results and conjectures derived from the exact formulas. 

\bigskip

\textbf{Flat case:}
The results for FLRW universes with flat spatial sections are summarized in table~\ref{table:flat}. The decay rates for the dust-filled and the de Sitter universes had already been obtained in \cite{AbbasiCraig} and \cite{NatarioSasane}, respectively.  The fact that the de Sitter universe satisfies the incomplete Huygens principle was first noticed in \cite{Yagdjian}.

\begin{table}[H] 
\begin{center}
\begin{tabulary}{\textwidth}{c|c|c|c|c} 
\toprule  
 Spacetime & Scale factor & $t(\tau)$ & Large times & Huygens \\[.4em]
 \hline 
 \rowcolor{gray!10}
 Minkowski & $a(\tau) = 1$ & $t \propto \tau$ & $|\phi| \lesssim t^{-1}$ & Yes (strong) \\  
 Dust & $a(\tau) \propto \tau^2$ & $t \propto \tau^3$ & $|\phi| \lesssim t^{-1}$ & No \\
 \rowcolor{gray!10}
 Radiation & $a(\tau) \propto \tau$ & $t \propto \tau^2$ & $|\phi| \lesssim t^{-1}$ & Yes (strong)\\
 de Sitter & $a(\tau) \propto \tau^{-1}$ & $t \propto - \log \tau$ & $|\partial_t \phi| \lesssim e^{-2t}$ & Yes (incomplete) \\
\bottomrule
\end{tabulary}
\caption{Properties of the solutions of the wave equation $\square_g \phi = 0$ in FLRW models with flat space sections.} \label{table:flat}
\end{center}
\end{table}

We now consider the family of flat FLRW metrics whose scale factor satisfies $a(t)=t^p$, with $p \ge 0$. These correspond to universes filled with a perfect fluid with linear equation of state $p_m = w \rho_m$ (and zero cosmological constant), where
\[
p = \frac{2}{3(1+w)}
\]
(see appendix~\ref{appendixA}). We summarize our (proved) conjecture for the decay of solutions of the wave equations in these spacetimes in figure~\ref{fig:decayflat}. The precise statement can be found in section~\ref{section6}.

\begin{figure}[H]
\centering
\includegraphics[width=0.85\textwidth]{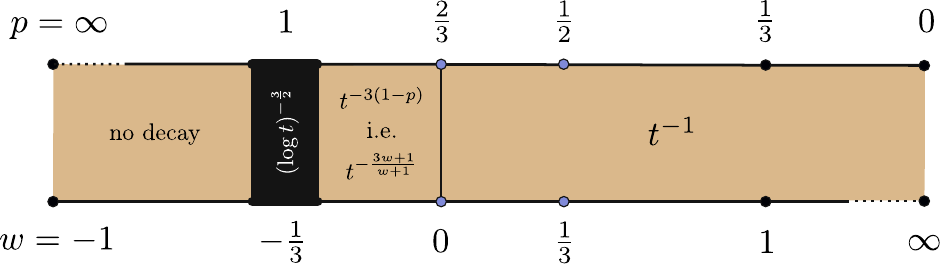}
\caption{Conjectured (and proved) decay rates in the flat case. The rates for $w \ge -\frac13$ follow from \cite{Nishihara, Wirth}, and the absence of decay for $w < -\frac13$ follows from \cite{CostaNatarioOliveira}. The Klainerman-Sarnak method gives the precise decay for the dust-filled ($w=0$) and radiation-filled ($w=\frac13$) universes.}
\label{fig:decayflat}
\end{figure}

\textbf{Hyperbolic case:} The results for FLRW universes with hyperbolic spatial sections are summarized in table~\ref{table:hyperbolic}. The decay rate in the dust-filled universe corrects the claim in \cite{AbbasiCraig}.

\renewcommand*{\arraystretch}{1.5}
\begin{table}[H]
\footnotesize
\begin{center}
\begin{tabulary}{\textwidth}{c|c|c|c|c} 
\toprule  
 Spacetime & Scale factor & $t(\tau)$ & Large times & Huygens \\[.4em]
 \hline 
 \rowcolor{gray!10}
 Dust & $a(\tau) = \cosh\tau-1$ & $t = \sinh\tau - \tau$ & $|\phi| \lesssim t^{-\frac{3}{2}}$ & No \\  
 Radiation & $a(\tau) = \sinh \tau$ & $t = \cosh\tau$ & $|\phi| \lesssim t^{-2}$ & Yes (strong) \\
 \rowcolor{gray!10}
 AdS & $a(\tau) = \sech\tau$ & $t = 2\arctan\left(\tanh\left( \frac{\tau}{2} \right) \right)$ & $\phi \to c$ & Yes (incomplete)\\
 de Sitter & $a(\tau) = -\cosech \tau$ & $t =-\log\left(-\tanh\left( \frac{\tau}{2} \right) \right)$ & $|\partial_t \phi| \lesssim e^{-2t}$ & Yes (incomplete) \\
 \rowcolor{gray!10}
 Milne & $a(\tau) = e^{\tau}$ & $t=e^{\tau}$ & $|\phi| \lesssim t^{-2}$ & Yes (strong) \\
\bottomrule
\end{tabulary}
\caption{Properties of the solutions of the wave equation $\square_g \phi = 0$ in FLRW models with hyperbolic space sections.}
\label{table:hyperbolic}
\end{center}
\end{table}

We now consider the family of hyperbolic FLRW universes filled with a perfect fluid with linear equation of state $p_m = w \rho_m$ (and zero cosmological constant -- see appendix~\ref{appendixA}). We summarize our conjecture for the decay of solutions of the wave equations in these spacetimes in figure~\ref{fig:decayhyperb}. The precise statement can be found in section~\ref{section6}.

\smallskip
\smallskip

\begin{figure}[H]
\centering
\includegraphics[width=0.9\textwidth]{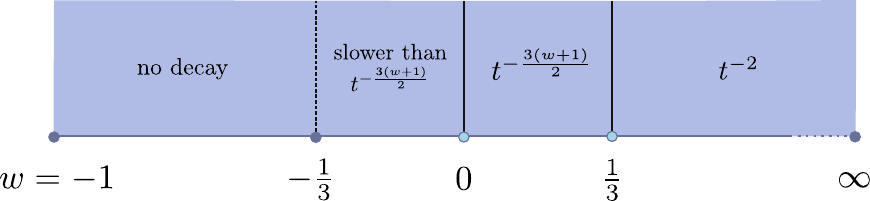}
\caption{Conjectured decay rates in the hyperbolic case. The Klainerman-Sarnak method gives the precise decay for the dust-filled ($w=0$) and radiation-filled ($w=\frac13$) universes.}
\label{fig:decayhyperb}
\end{figure}

\textbf{Spherical case:} The results for FLRW universes with spherical spatial sections are summarized in table~\ref{table:spherical}. The solutions in the dust-filled and radiation-filled universes blow up for {\em generic} initial data due to the Big Crunch singularity (see section~\ref{subsection1.4}).

\begin{table}[H]
\footnotesize
\begin{center}
\begin{tabulary}{\textwidth}{c|c|c|c|c} 
\toprule  
 Spacetime & Scale factor & $t(\tau)$ & Large times & Huygens \\[.4em]
 \hline 
 \rowcolor{gray!10}
 Dust & $a(\tau) = 1-\cos \tau$  & $t = \tau-\sin \tau$ & $|\phi|$ blows up*  & No \\  
 Radiation & $a(\tau) = \sin \tau$ & $t =1-\cos \tau$ & $|\phi|$ blows up*  & Yes (strong) \\
 \rowcolor{gray!10}
 de Sitter & $a(t) =\sec\tau$ & $t = 2 \arctanh\left( \tan \left(\frac{\tau}{2} \right) \right) $ & $|\partial_t \phi| \lesssim e^{-2t}$ & Yes (incomplete) \\
\bottomrule
\end{tabulary}
*generically (it is possible to choose initial data such that $|\phi|$ does not blow up)
\caption{Properties of the solutions of the wave equation $\square_g \phi = 0$ in FLRW models with spherical space sections.}
\label{table:spherical}
\end{center}
\end{table}

\subsection{A note on regularity}
In this paper we will assume that the initial data for the wave equation is smooth, so that the solutions are also smooth. Moreover, we will assume that the initial data belongs to any required Sobolev space, or even that it has compact support. All these assumptions can be easily relaxed, but we shall refrain from doing so as this would distract us from the main objectives of this paper.
%
%
\section{Condition for the existence of operators}\label{section1}
We are interested in solving the wave equation,
\begin{equation}
\square_g \phi = 0 \, ,
\end{equation}
in a $(3+1)$-dimensional FLRW background $(\bbR \times \Sigma_3, g)$, with metric 
\begin{equation} \label{flrwmetric3}
g = -dt^2 + a^2(t) d\varSigma_3^{\,2} \, ,
\end{equation}
where $d\varSigma_3^{\,2}$ is the standard Riemannian metric for $\Sigma_3=\bbR^3, \mathbb{H}^3$ or $S^3$. Introducing the conformal time
\begin{equation}
\tau = \int \frac{dt}{a(t)} \, ,
\end{equation}
we can write the wave equation in the form
\begin{equation} \label{wavephi}
\partial_{\tau}^2 \phi + 2\frac{a'}{a} \partial_{\tau} \phi = \Delta_{\Sigma_3} \phi \, ,
\end{equation}
where $\Delta_{\Sigma_3}$ is the Laplacian operator for the metric $d\varSigma_3^{\,2}$ and
\begin{equation}
a'(\tau) = \frac{d}{d\tau} a(t(\tau)).
\end{equation}
Now, let $\hat O$ be an operator acting on the space of smooth functions that commutes with $\Delta_{\Sigma_3}$. If the assumption
\begin{equation} \label{operatorassumption}
\hat O \left ( \partial_{\tau}^2 + 2 \frac{a'}{a} \partial_{\tau} \right) \phi = \partial_{\tau}^2 \hat O \phi + K \hat O \phi
\end{equation}
holds, where $K \in \{-1, 0, 1\}$ represents the curvature of the spatial sections we are considering,
then \eqref{wavephi} implies that $\hat O \phi$ satisfies the conformally invariant wave equation:
\begin{equation} \label{waveO}
\partial_{\tau}^2 \hat O \phi = (\Delta_{\Sigma_3} - K)\hat O \phi.
\end{equation}
We will search a suitable operator $\hat O$ by taking
\begin{equation}
\hat O \phi = f(\tau) \partial_{\tau} \phi + g(\tau) \phi,
\end{equation}
where $f$ and $g$ are functions to be determined. Plugging this expression into \eqref{operatorassumption}, we obtain the condition
\begin{align} \label{mustbe0}
\left(2 \frac{a^{\prime}}{a} f-2 f^{\prime}\right) \partial_{\tau}^{2} \phi +  
\left[2 \frac{a^{\prime}}{a} g-2 g^{\prime}-f^{\prime \prime}+2\left(\frac{a^{\prime}}{a}\right)^{\prime} f - Kf \right] \partial_{\tau} \phi & \nonumber \\
-(g^{\prime \prime} + Kg) \phi & = 0.
\end{align}
We will now try to satisfy this condition for each of the three different types of spatial sections.
%
%
\section{Flat case}\label{section2}
Let $K=0$. Equation \eqref{mustbe0} holds if and only if
\begin{equation} \label{addconstraint}
\begin{cases}
f(\tau) = \kappa a(\tau), \\
g(\tau) = \alpha \tau + \beta, \\
\displaystyle 2\frac{a'}{a}g - 2g' - f'' + 2\left(\frac{a'}{a}\right)' f = 0,
\end{cases}
\end{equation}
with $\alpha, \beta, \kappa \in \bbR$.
Let us consider the case
\begin{equation}
a(\tau) = \tau^j
\end{equation}
with $\tau>0$. It can be easily checked that there are non-trivial solutions $f$ and $g$ of the constraint equations \eqref{addconstraint} only for $j \in \{- 1, 0, 1, 2 \}$, provided that $\alpha, \beta, \kappa$ are chosen appropriately. We will analyze each of these four possibilities next.
%
%
\subsection{Minkowski spacetime}
\label{subsection2.1}
If $j=0$, the constraint equations \eqref{addconstraint} force us to choose $\alpha=0$, so that the functions $f$ and $g$ become
\begin{equation}
\begin{cases}
f(\tau) = \kappa, \\
g(\tau) = \beta.
\end{cases} 
\end{equation}
In this case $a(\tau) \equiv 1$, and so our FLRW universe is simply Minkowski's spacetime. Indeed, if $\phi$ is a solution of the wave equation in Minkowski's spacetime, then 
\begin{equation}
\hat O \phi = \kappa \partial_{\tau} \phi + \beta \phi
\end{equation}
is a solution as well.
%
%
\subsection{Dust-filled flat universe}
\label{subsection2.3}
If $j = 2$, conditions \eqref{addconstraint} force the functions $f$ and $g$ to be given by
\begin{equation}
\begin{cases}
f(\tau) = \kappa \tau^2, \\
g(\tau) = 3 \kappa \tau,
\end{cases}
\end{equation}
and therefore
\begin{equation}
\hat O \phi = \tau^2 \partial_{\tau} \phi + 3\tau \phi.
\end{equation}
In this case $a(\tau) = \tau^{2}$ and
\begin{equation}
\frac{dt}{d\tau} = a(\tau) \Leftrightarrow t \propto \tau^{3},
\end{equation}
so that
\begin{equation} \label{dustflatscalefactor}
a(t) \propto t^{\frac{2}{3}}.
\end{equation}
This scale factor $a(t)$ is a solution of the Friedmann equations when $K=0$, $\Lambda=0$ and $w=0$ (see appendix~\ref{appendixA}), corresponding to a dust-filled flat FLRW model (also known as the Einstein-de Sitter universe).

Here and in what follows we will denote the initial data for the wave equation at $t=t_0$ by
\begin{equation}
\begin{cases}
\phi(t_0, x) \eqqcolon \phi_0(x), \\
\partial_t \phi(t_0, x) \eqqcolon \phi_1(x).
\end{cases}
\end{equation}
Therefore, the initial data for
\begin{equation} \label{operator_dust}
\hat O \phi = \tau^2 \partial_{\tau} \phi + 3 \tau \phi = \tau^4 \partial_t \phi + 3 \tau \phi
\end{equation}
at the initial conformal time $\tau = \tau_0$ is:
\begin{equation} \label{initialdataop_dust}
\begin{cases}
\hat O \phi (\tau_0, x) = \tau_0^{\,4}\, \phi_1(x) + 3 \tau_0 \phi_0(x) \eqqcolon \psi_0(x), \\
\partial_{\tau} \hat O \phi(\tau_0, x) = \tau_0^{\,2}\, \Delta \phi_0(x) + \tau_0^{\,3}\, \phi_1(x) + 3 \phi_0(x) \eqqcolon \psi_1(x).
\end{cases}
\end{equation}
Here we used equation \eqref{wavephi} to write
\begin{equation}\partial_{\tau} \hat O \phi = \partial_{\tau} (\tau^2 \partial_{\tau}\phi + 3 \tau \phi) = \tau^2 \Delta \phi + \tau^3 \partial_t \phi + 3\phi. \nonumber
\end{equation}
Using Kirchhoff's formula for $\hat O \phi$, we have
\begin{align} \label{kirchoffflat_master}
\hat O \phi(\tau, x) &= \frac{1}{4 \pi(\tau- \tau_0)^2} \int_{\partial B_{\tau-\tau_0}(x)} \psi_0(y)dV_2(y) \nonumber \\
&+\frac{1}{4 \pi (\tau - \tau_0)} \int_{\partial B_{\tau-\tau_0} (x)} \nabla \psi_0(y) \cdot \frac{y-x}{|y-x|} dV_2(y) \\
&+ \frac{1}{4\pi(\tau-\tau_0)} \int_{\partial B_{\tau-\tau_0} (x)} \psi_1(y) dV_2(y). \nonumber
\end{align}
An explicit expression for $\phi$ can be obtained by noticing that $\hat O \phi = \tau^{-1} \partial_{\tau}(\tau^3 \phi)$ and then integrating Kirchhoff's formula:
\begin{empheq}[box=\fbox]{align}
\phi(\tau, x)  &=\frac{\tau_0^{\,3}}{\tau^3} \phi_0(x)+  \frac{1}{\tau^3}\int_{\tau_0}^{\tau} \frac{s}{4\pi (s-\tau_0)^2} \int_{\partial B_{s - \tau_0} (x)} \psi_0(y) dV_2 (y) ds \nonumber \\
&+ \frac{1}{\tau^3} \int_{\tau_0}^{\tau} \frac{s}{4 \pi (s -\tau_0) } \int_{\partial B_{s -\tau_0} (x)} \nabla \psi_0(y) \cdot \frac{y-x}{|y-x|} dV_2(y) ds \label{kirchhoff2} \\
&+ \frac{1}{\tau^3} \int_{\tau_0}^{\tau} \frac{s}{4 \pi (s-\tau_0)} \int_{\partial B_{s- \tau_0} (x)} \psi_1(y) dV_2(y) ds. \nonumber
\end{empheq}
This is the formula obtained by Klainerman and Sarnak in \cite{KlainermanSarnak}. Abbasi and Craig used this formula in \cite{AbbasiCraig} to obtain decay estimates by noticing that
\begin{equation}
\left | \int_{\tau_0}^{\tau} \frac{s}{4 \pi (s-\tau_0)^2} \int_{\partial B_{s- \tau_0}(x)} \psi_0(y) dV_2(y) ds \right | \lesssim \| \psi_0 \|_{\infty} + \| \psi_0 \|_1
\end{equation}
(and similarly for the other integrals in \eqref{kirchhoff2}), since 
\begin{align}
& \left | \int_{\tau_0}^{2 \tau_0} \frac{s}{4\pi (s - \tau_0)^2} \int_{\partial B_{s-\tau_0} (x)} \psi_0(y) dV_2(y) ds  \right| \nonumber \\
& = \left | \int_{\tau_0} ^{2 \tau_0} \frac{s}{4\pi} \int_{S^2} \psi_0(x + (s-\tau_0)z) dV_2(z) ds \right | \lesssim \| \psi_0 \|_{\infty}
\end{align}
and 
\begin{equation}
\frac{s}{4 \pi (s-\tau_0)^2} \lesssim 1 \quad \text{for} \quad s > 2 \tau_0.
\end{equation}
Therefore
\begin{equation}
|\phi| \lesssim \tau^{-3} (\| \phi_0 \|_{\infty} + \|\psi_0 \|_{\infty} + \| \psi_0 \|_{1} + \| \nabla \psi_0 \|_{\infty} + \| \nabla \psi_0 \|_{1} + \| \psi_1 \|_{\infty} + \| \psi_1 \|_{1}),
\end{equation}
and so, provided that the relevant norms are finite, we have
\begin{equation}\label{flatdecaydust}
|\phi| \lesssim t^{-1}.
\end{equation}

From \eqref{kirchhoff2} we expect the solutions of the wave equation to diverge as $t \to 0$, which can be traced back to the fact that the FLRW metric is singular at $t = 0$. However, as noticed in \cite{AbbasiCraig} (see also \cite{GiraoNatarioSilva, Ringstrom1}), it is possible to construct special solutions which have a well-defined limit as $t \to 0$. Indeed, if we fix $\phi_0$, $\phi_1$ and $\tau$, and take the limit $\tau_0 \to 0$ in equations \eqref{initialdataop_dust} and \eqref{kirchhoff2}, we obtain
\begin{equation}
\phi(\tau, x) = \frac{3}{4\pi\tau^3} \int_{B_{\tau}(x)} \phi_0(y) dV_3(y).
\end{equation}
It is clear that this expression determines a solution of the wave equation satisfying
\begin{equation}
\lim_{\tau \to 0} \phi(\tau,x) = \phi_0(x).
\end{equation}
%
%
\subsection{Radiation-filled flat universe} \label{subsection2.4}
If $j = 1$, conditions \eqref{addconstraint} tell us that the functions $f$ and $g$ must be of the form
\begin{equation} \label{fgradiation}
\begin{cases}
f(\tau) = \kappa \tau, \\
g(\tau) = \alpha \tau + \kappa.
\end {cases}
\end{equation}
In this case $a(\tau) = \tau$ and
\begin{equation}
\frac{dt}{d\tau} = a(\tau) \Leftrightarrow t \propto \tau^{2},
\end{equation}
which implies
\begin{equation} \label{aradiation}
a(\tau) \propto t^{\frac{1}{2}}.
\end{equation}
This scale factor is a solution of the Friedmann equations for $K=0$, $\Lambda=0$ and $w=\frac{1}{3}$ (see appendix~\ref{appendixA}). Therefore, this model corresponds to a flat, radiation-filled universe.
We notice that in this case $\hat O$ can be taken as the multiplicative operator,
\begin{equation}
\hat O \phi = \tau \phi.
\end{equation}
The reason for this is that the energy-momentum tensor of a radiation fluid is traceless, and thus the scalar curvature $R$ of the $4$-dimensional spacetime vanishes.
Consequently, the wave equation coincides with the conformally invariant wave equation, and so the scalar field
$
a \phi = \tau \phi
$
satisfies the wave equation in Minkowski spacetime. 
 
 The initial data for $\hat O \phi = \tau \phi$ at the initial conformal time $\tau = \tau_0$ is given by
 \begin{equation}\label{operatorinitdata_radiation}
 \begin{cases}
\hat O \phi(\tau_0, x) = \tau_0 \phi_0(x) \eqqcolon  \psi_0(x), \\
\partial_{\tau} \hat O \phi(\tau_0, x) = \tau_0^{\,2} \phi_1(x) + \phi_0(x) \eqqcolon  \psi_1(x).
 \end{cases}
 \end{equation}
 Therefore, if the appropriate Sobolev norms of the initial data are finite, by the standard decay of the wave equation in Minkowski spacetime (see e.g. \cite{Hormander}) we have, as $\tau \to +\infty$,
 \begin{equation}\label{flatdecayradiation}
 |\hat O \phi| \lesssim \tau^{-1} \Leftrightarrow |\phi| \lesssim \tau^{-2} \sim t^{-1},
 \end{equation}
 the same decay rate as in Minkowski spacetime.
 
 The explicit expression for $\phi$ is again given by Kirchhoff's formula \eqref{kirchoffflat_master}:
\begin{empheq}[box=\fbox]{align}
\phi(\tau, x) &= \frac{1}{4 \pi \tau (\tau- \tau_0)^2} \int_{\partial B_{\tau-\tau_0}(x)} \psi_0(y)dV_2(y) \nonumber \\
&+ \frac{1}{4 \pi \tau(\tau - \tau_0)} \int_{\partial B_{\tau-\tau_0} (x)} \nabla \psi_0(y) \cdot \frac{y-x}{|y-x|} dV_2(y) \label{kirchhoff3} \\
&+ \frac{1}{4\pi\tau(\tau-\tau_0)} \int_{\partial B_{\tau-\tau_0} (x)} \psi_1(y) dV_2(y). \nonumber
\end{empheq}
As one would expect, $\phi$ satisfies the strong Huygens principle; a related result was obtained in \cite{KulczyckiMalec, MalecWylezek} for gravitational waves in the Regge-Wheeler gauge.

Again it is possible to construct solutions with a well-defined limit as $t \to 0$ by fixing $\phi_0$, $\phi_1$ and $\tau$ and taking the limit $\tau_0 \to 0$ in equations \eqref{operatorinitdata_radiation} and \eqref{kirchhoff3}:
\begin{equation} 
\phi(\tau, x) = \frac{1}{4\pi\tau^2} \int_{\partial B_{\tau}(x)} \phi_0(y) dV_2(y).
\end{equation}
It is easy to check that this expression determines a solution of the wave equation satisfying
\begin{equation}
\lim_{\tau \to 0} \phi(\tau,x) = \phi_0(x).
\end{equation}
%
%
\subsection{Flat de Sitter universe} \label{subsection2.2}
If $j = - 1$, we can choose $\alpha=\beta=0$, so that the functions $f$ and $g$ become
\begin{equation}
\begin{cases}
f(\tau) = \kappa \tau^{-1}, \\
g(\tau) = 0.
\end{cases}
\end{equation}
These lead to the operator $\hat O  = \tau^{-1} \partial_{\tau} $.
In this case $a(\tau) = \tau^{- 1}$, and so
\begin{equation}
\frac{dt}{d\tau} = a(\tau) \Leftrightarrow 
t = \log \tau.
\end{equation}
This yields
\begin{equation} \label{adesitter}
a(t) = e^{-t},
\end{equation}
corresponding to the contracting flat de Sitter spacetime (which is a globally hyperbolic region of the full de Sitter universe); the expanding case can be trivially recovered by changing the time coordinate from $t$ to $-t$. 

The initial data for 
\begin{equation}
\hat O \phi = \tau^{-1} \partial_{\tau} \phi  = \tau^{-2} \partial_t \phi
\end{equation}
at the corresponding conformal time $\tau = \tau_0$ is given by
\begin{equation}
\begin{cases}
\hat O \phi(\tau_0, x) = \tau_0^{-2} \phi_1 (x) \eqqcolon \psi_0(x), \\
\partial_{\tau} \hat O \phi(\tau_0, x) = \tau_0^{-1} \Delta \phi_0(x) + \tau_0^{-3} \phi_1(x) \eqqcolon \psi_1 (x),
\end{cases}
\end{equation}
where we used \eqref{wavephi} to write
\begin{equation}
\partial_{\tau} \hat O \phi = \partial_{\tau}(\tau^{-1} \partial_{\tau} \phi) 
=\tau^{-1} \Delta \phi + \tau^{-3} \partial_t \phi.
\end{equation}
If the appropriate Sobolev norms of the initial data are finite then we have, by boundedness of the solution of the wave equation in the region $0 < \tau \leq \tau_0$ of Minkowski's spacetime,
\begin{equation}
|\hat O \phi|  \lesssim 1 \Leftrightarrow |\partial_t \phi| \lesssim \tau^2 = e^{2t}
\end{equation}
as $t \to - \infty$ (that is, as $\tau \to 0$). This translates to $|\partial_t \phi| \lesssim e^{-2t}$ as $t \to +\infty$ in the expanding case, confirming the result in \cite{NatarioSasane}.

An explicit expression for $\phi$ can be obtained by integrating Kirchhoff's formula \eqref{kirchoffflat_master}:
\begin{empheq}[box=\fbox]{align}
\phi(\tau, x) &= \phi_0(x) + \int_{\tau_0}^{\tau} \frac{s}{4 \pi(s- \tau_0)^2} \int_{\partial B_{s-\tau_0}(x)} \psi_0(y)dV_2(y) ds \nonumber \\
&+\int_{\tau_0}^{\tau}\frac{s}{4 \pi (s - \tau_0)} \int_{\partial B_{s-\tau_0} (x)} \nabla \psi_0(y) \cdot \frac{y-x}{|y-x|} dV_2(y) ds \\
&+ \int_{\tau_0}^{\tau}\frac{s}{4\pi(s-\tau_0)} \int_{\partial B_{s-\tau_0} (x)} \psi_1(y) dV_2(y) ds. \nonumber
\end{empheq}
In the particular case when $\phi_1 \equiv 0$, we obtain
\begin{align}
\phi(\tau, x) & = \phi_0(x) + \int_{\tau_0}^{\tau}\frac{s}{4\pi\tau_0(s-\tau_0)} \int_{\partial B_{s-\tau_0} (x)} \Delta\phi_0(y) dV_2(y) ds \nonumber \\
& = \phi_0(x) + \frac{1}{4\pi\tau_0} \int_{B_{\tau-\tau_0} (x)} \left( 1 + \frac{\tau_0}{|z-x|}\right) \Delta\phi_0(z) dV_3(z).
\end{align}
Using the divergence theorem and Green's identities, this expression can be rewritten as
\begin{align}
\phi(\tau, x) & = \frac{\tau}{4 \pi \tau_0 (\tau - \tau_0)} \int_{\partial B_{\tau-\tau_0} (x)} \nabla \phi_0(y) \cdot \frac{y-x}{|y-x|} dV_2(y) \nonumber \\
& + \frac{1}{4 \pi (\tau - \tau_0)^2} \int_{\partial B_{\tau-\tau_0} (x)} \phi_0(y) dV_2(y) .
\end{align}
In particular, $\phi(\tau, x)$ depends only on the values of $\phi_0$ and $\nabla \phi_0$ on $\partial B_{\tau-\tau_0} (x)$, that is, $\phi$ satisfies the strong Huygens principle in the special case when $\phi_1 \equiv 0$. This fact was previously noticed by Yagdjian in \cite{Yagdjian}, who dubbed this property the {\em incomplete Huygens principle}.
%
%
\section{Hyperbolic case}
\label{section3}
Next, we want to find suitable operators $\hat O$ in the case $K=-1$. We start by noticing that the constraints \eqref{mustbe0} require the functions $f$ and $g$ in operators of the form $\hat O = f(\tau) \partial_{\tau} + g(\tau)$ to satisfy
\begin{equation}\label{def_fg_hyp}
\begin{cases}
f(\tau) = \kappa a(\tau), \\
g(\tau) = \alpha \cosh(\tau) + \beta \sinh(\tau), \\
\displaystyle 2 \frac{a^{\prime}}{a} g-2 g^{\prime}-f^{\prime \prime}+2\left(\frac{a^{\prime}}{a}\right)^{\prime} f + f = 0.
\end{cases}
\end{equation}
These conditions can be met for certain choices of scale factors $a(\tau)$ satisfying the Friedmann equations (see appendix~\ref{appendixA}). In these cases, it is possible to obtain an explicit formula for $\phi$ from the expression of $\hat O \phi$ in terms of spherical means, analogous to Kirchhoff's formula in $\bbR^3$, which we now discuss. 
%
%
\subsection{Spherical means formula for $K=-1$}
\label{subsection3.1}
We will now obtain a spherical means formula for the solution of the conformally invariant wave equation \eqref{waveO} in the spacetime given by the metric
\begin{equation}
g = -dt^2+d \varSigma_3^{\, 2},
\end{equation}
where $d\varSigma_3^{\, 2}$ is the line element for the hyperbolic space $\mathbb{H}^3$. What follows is a simplified version of the more general derivation in \cite{KlainermanSarnak}. 

Consider the Cauchy problem
\begin{equation} \label{wavehyperb}
\begin{cases}
\partial_t^2 \phi = \Delta\phi+\phi, \\
\phi(0, x) = \phi_0(x), \\
\partial_t \phi(0, x) = \phi_1(x),
\end{cases}
\end{equation}
where $\Delta$ is the Laplacian operator in $\mathbb{H}^3$. Using geodesic polar coordinates about some point $x \in \mathbb{H}^3$, we know that
\begin{equation} \label{spatiallineel}
d\varSigma_3^{\,2} = dr^2 + \sinh^2(r)d\varOmega^2,
\end{equation}
where $d\varOmega^2$ is the line element for the unit sphere $S^2$, and also that
\begin{equation} \label{laplacianhyperb}
\Delta = \pdv[2]{r}+ 2 \cotanh(r) \pdv{r} + \frac{1}{\sinh^2(r)} \Delta_{S^2}, 
\end{equation}
where $\Delta_{S^2}$ is the Laplacian on $S^2$.
The geodesic sphere $S(x, r)$ about the point $x$ is defined as
\begin{equation}\label{geodesicdef}
S(x, r) \coloneqq \exp_x \left( \{z \in T_x\mathbb{H}^3 : \|z\|= r \} \right) \equiv \exp_x(\partial B_r(0)),
\end{equation}
where $\exp_x\colon T_x\mathbb{H}^3 \to \mathbb{H}^3$ is the geodesic exponential map and $B_r(0)$ is the ball of radius $r$ about $0$ in $T_x\mathbb{H}^3$.
We define the {\em spherical mean} of a function $\phi(t, x)$ on the geodesic sphere of radius $r$ about $x$ as 
\begin{align} \label{sphericalmean}
M_{\phi}(t, r, x) &\coloneqq \frac{1}{4 \pi \sinh^2(r)} \int_{S(x, r)} \phi(t, y) dV_2(y) \nonumber \\
&= \frac{1}{4\pi}\int_{S^2} \phi(t, \exp_x(rz)) dV_2(z),
\end{align}
where $S^2$ is the unit sphere in $T_x\mathbb{H}^3$. Note that this last formula allows us to extend the spherical mean to negative values of the second argument, yielding
\begin{equation}
M_{\phi}(t, -r, x)=M_{\phi}(t, r, x).
\end{equation}
Now, if $\phi$ is a solution of the problem \eqref{wavehyperb}, we have
\begin{equation}\label{m_wavehyperb}
\pdv[2]{t} M_{\phi}(t, r, x) 
 = \left( \pdv[2]{r}+2\cotanh(r) \pdv{r} + 1 \right) M_{\phi}(t, r, x),
\end{equation}
where we used \eqref{sphericalmean}, \eqref{wavehyperb} and the divergence theorem on the sphere.
From \eqref{wavehyperb}, we have the initial data 
\begin{equation}
\begin{cases}
M_{\phi}(0, r, x)=M_{\phi_0}(r; x), \\
\partial_t M_{\phi}(0, r, x) = M_{\phi_1}(r; x).
\end{cases}
\end{equation} 
Using \eqref{m_wavehyperb}, it can be checked that $\omega(t,r,x) \coloneqq \sinh(r) M_{\phi}(t,r,x)$ satisfies
\begin{equation} \label{telegraph}
\partial_t^2 \omega=\partial_r^2 \omega.
\end{equation}
Given the initial data 
\begin{equation}
\begin{cases}
\omega(0,r,x) = \sinh(r)M_{\phi_0}(r,x) \eqqcolon \gamma(r,x), \\
\partial_t \omega(0,r,x) = \sinh(r) M_{\phi_1}(r,x) \eqqcolon \psi(r,x),
\end{cases}
\end{equation}
the solution of equation \eqref{telegraph} is given by the d'Alembert formula:
\begin{equation}
\omega(t,r,x) = \frac{1}{2}\left[\gamma(r+t,x)+\gamma(r-t,x)+\int_{r-t}^{r+t} \psi(s,x)ds  \right].
\end{equation}
Since $M_\phi(t, r, x) \to \phi(t, x)$ as $r \to 0$, we obtain
\begin{align}
\phi(t, x) &= \lim_{r \to 0} \frac{1}{2 \sinh(r)} \left[\gamma(r+t,x)+\gamma(r-t,x)+\int_{r-t}^{r+t} \psi(s,x)ds  \right] \nonumber \\
&= \lim_{r \to 0} \frac{1}{2 \cosh(r)} \left[\gamma'(r+t,x) + \gamma'(r-t,x) + \psi(r+t,x)-\psi(r-t,x) \right] \nonumber \\
&=\gamma'(t,x)+\psi(t,x) = \partial_t\left[\sinh(t)M_{\phi_0}(t,x)\right] + \sinh(t)M_{\phi_1}(t,x),
\end{align}
where we used L'H\^{o}pital's rule and the fact that $\gamma$ and $\psi$ are odd functions. From \eqref{sphericalmean}, the solution of the Cauchy problem \eqref{wavehyperb} is then given by
\begin{empheq}[box=\fbox]{align} \label{hyperbolicsphericalmean2}
\phi(t, x) &= \frac{\cosh(t)}{4 \pi} \int_{S^2} \phi_0 \big(\exp_x(tz) \big)\, dV_2(z) \nonumber \\
& + \frac{\sinh(t)}{4 \pi} \int_{S^2} d\phi_0(\dot c_z(t)) \, dV_2(z) \\
& + \frac{\sinh(t)}{4 \pi} \int_{S^2} \phi_1 \big(\exp_x(tz) \big) \, dV_2(z), \nonumber
\end{empheq}
with $c_z(t)=\exp_x(tz)$.
%
%
\subsection{Dust-filled hyperbolic universe}\label{subsection3.2}
Let us consider the case
\begin{equation}
a(\tau ) = \cosh(\tau)-1
\end{equation}
with $\tau>0$. This corresponds to the dust-filled hyperbolic model, since $a(\tau)$ solves Friedmann's equations for $K=-1$, $\Lambda=0$ and $w = 0$ (see appendix~\ref{appendixA}).
With this choice of the scale factor, the wave equation \eqref{wavephi} becomes
\begin{equation} \label{wavehyperbdust}
\partial_{\tau}^2 \phi + \frac{2 \sinh(\tau)}{\cosh(\tau)-1} \partial_{\tau} \phi = \Delta \phi.
\end{equation}
In this case, the constraint equations \eqref{def_fg_hyp} are satisfied if we choose $\kappa = 1$, $\alpha = 0$ and $\beta = \frac{3}{2}$, that is, if we choose
\begin{equation}\label{hyperbdustoperatordef}
\hat O  \phi = (\cosh(\tau) - 1) \partial_{\tau} \phi + \frac{3}{2} \sinh(\tau) \phi.
\end{equation}
We notice that we can rewrite this as
\begin{equation} \label{hyperbdustoperator}
\hat O \phi = \frac{2}{\sinh \left (\frac{\tau}{2} \right)} \partial_{\tau}\left(\sinh^3 \left(\frac{\tau}{2} \right ) \phi \right),
\end{equation}
which is the expression that can be found in \cite{KlainermanSarnak}.
The initial data for $\hat O \phi$ at the initial conformal time $\tau = \tau_0$ is then
\begin{equation}\label{initialdatahyperbdust}
\begin{cases}
\hat O \phi(\tau_0, x)=(\cosh(\tau_0)-1) \phi_1(x) +  \frac{3}{2} \sinh(\tau_0) \phi_0(x) \eqqcolon \psi_0(x), \\[0.5em]
\partial_{\tau} \hat O \phi(\tau_0, x) = \frac{1}{2} \sinh(\tau_0) (\cosh(\tau_0) - 1) \phi_1(x) \\ 
\qquad \qquad \qquad \quad + (\cosh(\tau_0)-1)\Delta \phi_0(x) + \frac{3}{2} \cosh(\tau_0) \phi_0(x) \eqqcolon \psi_1(x), 
\end{cases}
\end{equation}
where we used \eqref{wavehyperbdust} to write
\begin{equation}
\partial_{\tau} \hat O \phi = \frac{1}{2} \sinh(\tau) (\cosh(\tau) - 1) \partial_t \phi + (\cosh \tau-1) \Delta \phi + \frac{3}{2}\cosh(\tau) \phi.
\end{equation}
Then \eqref{hyperbolicsphericalmean2} and \eqref{hyperbdustoperator} give the following expression for the solution:
\begin{empheq}[box=\fbox]{align}
\label{hyperbdustsolution}
\sinh^3&\left(\frac{\tau}{2} \right) \phi(\tau, x) = \sinh^3 \left(\frac{\tau_0}{2} \right) \phi_0(x) \nonumber \\
&+ \int_{\tau_0}^{\tau} \frac{\cosh(s-\tau_0)}{8\pi} \sinh \left( \frac{s}{2} \right)\int_{S^2} \psi_0 \big (\exp_x ((s-\tau_0)z) \big) \, dV_2(z) ds \nonumber \\
&+ \int_{\tau_0}^{\tau}\frac{\sinh(s-\tau_0)}{8\pi}  \sinh \left( \frac{s}{2} \right) \int_{S^2} d\psi_0(c'_z(s-\tau_0)) \, dV_2(z) ds \\
&+ \int_{\tau_0}^{\tau}\frac{\sinh(s-\tau_0)}{8\pi}  \sinh \left( \frac{s}{2} \right ) \int_{S^2} \psi_1 \big (\exp_x ((s-\tau_0)z) \big) \, dV_2(z) ds,  \nonumber
\end{empheq}

Let us analyze the decay of the solution for $\tau \to +\infty$. The first integral, for instance, can be estimated by noting that
\begin{equation}
\left | \int_{\tau_0}^{2\tau_0}  \cosh(s-\tau_0) \sinh\left (\frac{s}{2} \right)\int_{S^2} \psi_0 \big (\exp_x ((s-\tau_0)z) \big)\, dV_2(z) ds  \right| \lesssim \| \psi_0 \|_{\infty}
\end{equation}
and that, rewriting the integral on $S^2$ as an integral over the geodesic sphere,
\begin{align}
& \left | \int_{2\tau_0}^{\tau}  \cosh(s-\tau_0) \sinh\left (\frac{s}{2} \right)\int_{S^2} \psi_0 \big (\exp_x ((s-\tau_0)z) \big)\, dV_2(z) ds  \right| \nonumber \\
& = \left | \int_{2\tau_0}^{\tau} \frac{\cosh(s-\tau_0)}{\sinh^2(s-\tau_0)} \sinh\left( \frac{s}{2} \right) \int_{S(x, s - \tau_0)}  \psi_0(y) dA(y) ds \right | \lesssim \| \psi_0 \|_{1}.
\end{align}
Similar computations apply to the other integrals in expression \eqref{hyperbdustsolution}. After dividing both sides by $\sinh^3 \left( \frac{\tau}{2} \right)$, we can write
\begin{equation}\label{decaytoimprove}
|\phi| \lesssim e^{-\frac{3}{2} \tau} (\| \phi_0 \|_{\infty} + \|\psi_0 \|_{\infty} + \| \psi_0 \|_{1} + \| d \psi_0 \|_{\infty} + \| d \psi_0 \|_{1} + \| \psi_1 \|_{\infty} + \| \psi_1 \|_{1}),
\end{equation}
and so, provided that the relevant norms are finite, we have
\begin{equation}\label{hypdecaydust}
|\phi| \lesssim e^{-\frac{3}{2} \tau}.
\end{equation}
We notice that, in general, a faster decay rate is not possible (see appendix~\ref{appendixB}). To obtain this estimate in terms of the coordinate time $t$, we see from
\begin{equation}
t=\int a(\tau) d\tau=  \sinh(\tau)-\tau
\end{equation}
that, as $\tau \to +\infty$,
\begin{equation}
t \sim e^{\tau},
\end{equation}
whence
\begin{equation}
|\phi| \lesssim t^{-\frac{3}{2}}.
\end{equation}

As in the flat case, there exist solutions which have a well-defined limit at the Big Bang: if we take the limit $\tau_0 \to 0$ while keeping $\phi_0$, $\phi_1$ and $\tau$ fixed, we obtain from \eqref{initialdatahyperbdust} and \eqref{hyperbdustsolution} the limit solution
\begin{equation}
\phi\left (\tau, x \right) = \frac{3}{16\pi\sinh^3\left( \frac{\tau}{2} \right)} \int_0^{\tau} \sinh(s) \sinh\left( \frac{s}{2} \right) \int_{S^2} \phi_0(\exp_x(zs)) dV_2(z) ds.
\end{equation}
Using L'H\^{o}pital's rule, one can easily check that
\begin{equation}
\lim_{\tau \to 0} \phi(\tau, x) =  \phi_0(x).
\end{equation}
%
%
\subsection{Radiation-filled hyperbolic universe}
\label{subsection3.3}
Let us now consider the case
\begin{equation}
a(\tau ) = \sinh(\tau)
\end{equation}
with $\tau>0$. This corresponds to the radiation-dominated hyperbolic universe, since $a(\tau)=\sinh(\tau)$ solves the Friedmann equations for $K=-1$, $\Lambda=0$ and $w = \frac{1}{3}$ (see appendix~\ref{appendixA}).
With this choice of the scale factor, our wave equation \eqref{wavephi} becomes
\begin{equation}\label{hyperbradiationwaveeqn}
\partial_{\tau}^2 \phi + 2 \cotanh(\tau) \partial_{\tau} \phi = \Delta \phi.
\end{equation}
In this case, the constraint equations \eqref{def_fg_hyp} are satisfied if we choose $\alpha = \kappa = 0$ and $\beta = 1$, that is, if we choose
\begin{equation}
\hat O \phi = \sinh(\tau) \phi.
\end{equation}
Again, this operator can be taken as multiplicative because in this universe the wave equation coincides with the conformally invariant wave equation.

The initial data for $\hat O \phi$ at the initial conformal time $\tau=\tau_0$ is
\begin{equation}\label{hyperbradiationinitdata}
\begin{cases}
\hat O \phi(\tau_0, x) = \sinh(\tau_0) \phi_0(x) \eqqcolon \psi_0(x), \\
\partial_{\tau} \hat O \phi(\tau_0, x) = \cosh(\tau_0) \phi_0(x) + \sinh^2(\tau_0) \phi_1(x) \eqqcolon \psi_1(x).
\end{cases}
\end{equation}
The solution in terms of spherical means is then given by \eqref{hyperbolicsphericalmean2}:
\begin{empheq}[box=\fbox]{align} \label{hyperbradiationsphericalmeans}
\sinh(\tau) \phi(\tau,x) &= \frac{\cosh(\tau-\tau_0)}{4\pi} \int_{S^2} \psi_0 \big(\exp_x((\tau-\tau_0)z) \big) \, dV_2(z) \nonumber \\ 
&  + \frac{\sinh(\tau-\tau_0)}{4\pi} \int_{S^2} d \psi_0(c'_z(\tau-\tau_0)) \, dV_2(z) \\
&  + \frac{\sinh(\tau-\tau_0)}{4\pi} \int_{S^2} \psi_1 \big(\exp_x((\tau-\tau_0)z) \big) \, dV_2(z). \nonumber
\end{empheq}
As in the flat case, $\phi$ satisfies the strong Huygens principle.

Let us now analyze the decay of the solution as $\tau \to +\infty$. We notice that, after dividing the formula above by $\sinh(\tau)$, the first term of the solution can be estimated as 
\begin{align}
|\phi| &= \left | \frac{\cosh(\tau-\tau_0)}{4\pi\sinh(\tau)\sinh^2(\tau-\tau_0)} \int_{S(x, \tau-\tau_0)}\psi_0(y) \, dA(y) \right |  \\
& \lesssim \left (\|\psi_0\|_{L^1(\mathbb{H}^3)} + \| d\psi_0 \|_{L^1(\mathbb{H}^3)} \right ) e^{-2\tau}. \nonumber
\end{align}
In fact, assuming for simplicity that the initial data has compact support, we have
\begin{align}
\int_{S(x, R)}&\psi_0(y) dA(y) = -\int_{R}^{+\infty} \dv{r} \left( \int_{S(x, r)} \psi_0(y) dA(y) \right) dr \nonumber \\
= &-\int_R^{+\infty} 2\sinh(r) \cosh(r) \left( \int_{S^2} \psi_0 \left(\exp_x(rz)\right) dV_2(z)  \right) dr  \\
&- \int_R^{+\infty} \sinh^2(r) \left( \int_{S^2} d\psi_0(\dot c_z(r)) dV_2(z) \right) dr, \nonumber
\end{align}
where $c_z(r) = \exp_x(rz)$, whence
\begin{equation}
\left|\int_{S(x, R)} \psi_0(y) dA(y)\right| \lesssim \|\psi_0\|_{L^1(\mathbb{H}^3)} + \| d\psi_0 \|_{L^1(\mathbb{H}^3)}
\end{equation}
for all $R>0$. The remaining two terms have the same decay rate, which can be found after similar computations.
Therefore, if the relevant Sobolev norms of the initial data are finite, we have
\begin{equation}
|\phi| \lesssim e^{-2\tau}.
\end{equation}
To obtain the decay rate in terms of the time variable $t$, we use
\begin{equation}
t = \int \sinh(\tau) d\tau = \cosh(\tau) \sim e^\tau,
\end{equation}
so that, as $t \to +\infty$,
\begin{equation}
|\phi| \lesssim t^{-2}.
\end{equation}

Again there exist solutions with a well-defined limit at the Big Bang: if we take the limit $\tau_0 \to 0$ while keeping $\phi_0$, $\phi_1$ and $\tau$ fixed, we obtain from \eqref{hyperbradiationinitdata} and \eqref{hyperbradiationsphericalmeans} the limit solution
\begin{equation}
\phi(\tau, x)=\frac{1}{4\pi} \int_{S^2}\phi_0 \big(\exp_x(\tau z) \big) dV_2(z),
\end{equation}
which clearly converges to  $\phi_0(x)$ as $\tau \to 0$.
%
%
\subsection{Anti-de Sitter universe}
\label{subsection3.4}
Let us now analyze the case
\begin{equation}
a(\tau ) = \sech(\tau).
\end{equation}
This corresponds to a globally hyperbolic region of the anti-de Sitter universe, since $a(\tau)= \sech(\tau)$ is a solution for the Friedmann equations for $K=-1$, $\Lambda=-3$ and $\rho_0= 0$  (see appendix~\ref{appendixA}). We can obtain the scale factor as a function of the physical time $t$ by noticing that
\begin{equation}
t= \int \frac{d\tau}{\cosh(\tau)} = 2 \arctan \left( \tanh \left ( \frac{\tau}{2} \right) \right),
\end{equation}
and so
\begin{equation}
a(t) = \sech \left( 2 \arctanh \left( \tan\left( \frac{t}{2} \right) \right) \right) = \cos (t),
\end{equation}
with $t \in \left (-\frac{\pi}{2}, \frac{\pi}{2} \right)$.  

With this choice of the scale factor, \eqref{wavephi} becomes
\begin{equation}\label{hyperbvacuumwaveeqn}
\partial_{\tau}^2 \phi - 2 \tanh(\tau) \partial_{\tau} \phi = \Delta \phi.
\end{equation}
We can satisfy the constraint equations \eqref{def_fg_hyp} by choosing $\alpha = \beta = 0$ and $\kappa = 1$, so that we have
\begin{equation}
\hat O \phi = \sech(\tau) \partial_{\tau} \phi.
\end{equation}
 As before, we can express the solution in terms of spherical means. We have the following initial data for $\hat O \phi$ at the initial conformal time $\tau = \tau_0$:
\begin{equation}\label{hyperbvacuuminitialdata1}
\begin{cases}
\hat O \phi(\tau_0, x) = \sech^2(\tau_0) \phi_1(x) \eqqcolon \psi_0(x), \\
\partial_{\tau} \hat O \phi(\tau_0, x) = \sech^2(\tau_0) \tanh(\tau_0) \phi_1(x) + \sech(\tau_0) \Delta \phi_0(x) \eqqcolon \psi_1(x),
\end{cases}
\end{equation}
where we used  \eqref{hyperbvacuumwaveeqn} to write
\begin{equation}
\partial_{\tau} \hat O \phi = \sech^2(\tau)\tanh(\tau)\partial_t \phi + \sech(\tau) \Delta \phi. 
\end{equation}
The spherical means solution is then
\begin{empheq}[box=\fbox]{align} \label{solutionAdS}
\phi(\tau, x) &=  \int_{\tau_0}^{\tau} \frac{\cosh(s-\tau_0)}{4\pi} \cosh(s) \int_{S^2} \psi_0 \big(\exp_x ((s-\tau_0)z) \big) \, dV_2(z) ds  \nonumber \\
&  + \int_{\tau_0}^{\tau} \frac{\sinh(s-\tau_0)}{4\pi} \cosh(s) \int_{S^2} d\psi_0(c'_z(s-\tau_0)) \, dV_2(z)  ds \\
&  + \int_{\tau_0}^{\tau} \frac{\sinh(s-\tau_0)}{4\pi} \cosh(s)\int_{S^2} \psi_1 \big (\exp_x ((s-\tau_0)z) \big ) \, dV_2(z) ds \nonumber \\
&  + \phi_0(x). \nonumber
\end{empheq}
In the particular case when $\phi_1 \equiv 0$ we obtain
\begin{align}
\phi(\tau, x) & = \phi_0(x) + \int_{0}^{\tau-\tau_0} \frac{\sinh(r)}{4\pi} \cosh(r+\tau_0)\int_{S^2} \sech(\tau_0) \Delta\phi_0 \big (\exp_x (rz) \big ) \, dV_2(z) dr \nonumber \\
& = \phi_0(x) + \frac{1}{4\pi} \int_{B_{\tau-\tau_0} (x)} \left( \tanh(\tau_0) + \cotanh(r) \right) \Delta\phi_0(y) dV_3(y),
\end{align}
where we defined $r(y)=\disth(y, x)$. Using the divergence theorem and Green's identities, together with $\Delta \cotanh(r) = 0$, this expression can be rewritten as
\begin{align}
\phi(\tau, x) & = \frac{\cosh(\tau)}{4 \pi \cosh(\tau_0) \sinh(\tau - \tau_0)} \int_{\partial B_{\tau-\tau_0} (x)} \nabla \phi_0(y) \cdot \nabla r(y) dV_2(y) \nonumber \\
& + \frac{1}{4 \pi \sinh^2(\tau - \tau_0)} \int_{\partial B_{\tau-\tau_0} (x)} \phi_0(y) dV_2(y).
\end{align}
Therefore, $\phi(\tau, x)$ depends only on the values of $\phi_0$ and $\nabla \phi_0$ on $\partial B_{\tau-\tau_0} (x)$, that is, $\phi$ satisfies Yagdjian's incomplete Huygens principle. In particular, if $\phi_0$ is compactly supported then we have
\begin{equation}
\lim_{\tau \to +\infty} \phi(\tau,x) = 0.
\end{equation}
Interestingly, in the case when $\phi_0 \equiv 0$ we have, for compactly supported $\phi_1$,
\begin{equation} \label{limitAdS}
\lim_{\tau \to +\infty} \phi(\tau,x) = - \frac{\sech^3(\tau_0)}{4\pi} \int_{\mathbb{H}^3} \phi_1(y) dV_3(y)
\end{equation}
(in particular this limit does not depend on $x$). By the superposition principle, this result holds in general for compactly supported initial data; it can be interpreted by noting that for fixed $x$ the curve $\tau \mapsto (\tau,x)$ approaches a single point in the full anti-de Sitter spacetime as $\tau \to +\infty$ (see \cite{Natario}).

To check \eqref{limitAdS}, we note that for $\phi_0 \equiv 0$ equation \eqref{solutionAdS} can be written in the form
\begin{align}
\phi(\tau, x) =  & - \frac{\partial}{\partial \tau_0} \int_{0}^{\tau-\tau_0} \frac{\sinh(r)}{4\pi} \cosh(r + \tau_0) \int_{S^2} \psi_0 \big(\exp_x (rz) \big) \, dV_2(z) dr  \nonumber \\
&  + \int_{0}^{\tau-\tau_0} \frac{\sinh(r)}{4\pi} \cosh(r+\tau_0)\int_{S^2} \psi_1 \big (\exp_x (r z) \big ) \, dV_2(z) dr.
\end{align}
Discarding the boundary term, whose limit will vanish for compactly supported initial data, and using \eqref{hyperbvacuuminitialdata1}, we obtain
\begin{equation}  
\lim_{\tau \to +\infty} \phi(\tau,x) = - \frac{\sech^3(\tau_0)}{4\pi} \int_{0}^{+\infty} \sinh^2(r) \int_{S^2} \phi_1 \big(\exp_x (rz) \big) \, dV_2(z) dr .
\end{equation}
%
%
\subsection{Hyperbolic de Sitter universe}
\label{subsection3.5}
Let us now consider the case
\begin{equation}
a(\tau ) = - \cosech(\tau)
\end{equation}
with $\tau < 0$. This corresponds to a globally hyperbolic region of the de Sitter universe, since $a(\tau)= -\cosech(\tau)$ is a solution for the Friedmann equations for $K=-1$, $\Lambda=-3$ and $\rho_0= 0$  (see appendix~\ref{appendixA}). We can obtain the scale factor as a function of the physical time $t$ by noticing that
\begin{equation}\label{confdesitterhyperb}
t = -\int \frac{d\tau}{\sinh(\tau)} = -\log \left( -\tanh \left ( \frac{\tau}{2} \right) \right),
\end{equation}
and so
\begin{equation}\label{scalefactordesitterhyperb}
a(t) = -\cosech(2 \arctanh(-e^{-t}))=\sinh(t),
\end{equation}
with $t \in (0, +\infty)$.

This choice of the scale factor in \eqref{wavephi} leads to
\begin{equation}\label{hyperbvacuumwaveeqn2}
\partial_{\tau}^2 \phi - 2 \cotanh(\tau) \partial_{\tau} \phi = \Delta \phi
\end{equation}
We can satisfy the constraint equations \eqref{def_fg_hyp} by choosing $\alpha = \beta = 0$ and $\kappa = -1$, so that
\begin{equation}
\hat O \phi = -\cosech(\tau) \partial_{\tau} \phi.
\end{equation}
Again we can repeat the same steps as in the previous cases to obtain the solution in terms of spherical means. The initial data for $\hat O \phi$ at the initial conformal time $\tau = \tau_0$ is
\begin{equation}\label{hyperbvacuuminitialdata2}
\begin{cases}
\hat O \phi(\tau_0, x) = \cosech^2(\tau_0) \phi_1(x) \eqqcolon \psi_0(x), \\
\partial_{\tau} \hat O \phi(\tau_0, x) = \cosech^2(\tau_0) \cotanh(\tau_0) \phi_1(x) - \cosech(\tau_0)\Delta \phi_0(x) \eqqcolon \psi_1(x),
\end{cases}
\end{equation}
where we used \eqref{hyperbvacuumwaveeqn2} to write
\begin{equation}
\partial_{\tau} \hat O \phi = \cosech^2(\tau)\cotanh(\tau)\partial_t \phi - \cosech(\tau) \Delta \phi.
\end{equation}
The spherical means solution is then
\begin{empheq}[box=\fbox]{align}\label{desitterhyperbsolution}
\phi(\tau, x) &=  -\int_{\tau_0}^{\tau} \frac{\cosh(s-\tau_0)}{4\pi} \sinh(s) \int_{S^2} \psi_0 \big(\exp_x ((s-\tau_0)z) \big) \, dV_2(z) ds \nonumber \\
& \quad - \int_{\tau_0}^{\tau} \frac{\sinh(s-\tau_0)}{4\pi} \sinh(s) \int_{S^2} d\psi_0(c'_z(s-\tau_0)) \, dV_2(z)  ds \\
& \quad - \int_{\tau_0}^{\tau} \frac{\sinh(s-\tau_0)}{4\pi} \sinh(s)\int_{S^2} \psi_1 \big(\exp_x ((s-\tau_0)z) \big) \, dV_2(z) ds \nonumber \\
& \quad + \phi_0 (x). \nonumber
\end{empheq}
Repeating the same steps as was done for the anti-de Sitter solution, one can show that when $\phi_1 \equiv 0$ we have
\begin{align}
\phi(\tau, x) & = \frac{\sinh(\tau)}{4 \pi \sinh(\tau_0) \sinh(\tau - \tau_0)} \int_{\partial B_{\tau-\tau_0} (x)} \nabla \phi_0(y) \cdot \nabla r(y) dV_2(y) \nonumber \\
& + \frac{1}{4 \pi \sinh^2(\tau - \tau_0)} \int_{\partial B_{\tau-\tau_0} (x)} \phi_0(y) dV_2(y).
\end{align}
Therefore, again $\phi$ satisfies Yagdjian's incomplete Huygens principle.

Similarly to what happens in the flat de Sitter universe, the time derivative of the solution decays exponentially as $t \to +\infty$,
\begin{equation}
|\partial_t \phi| \lesssim e^{-2t}.
\end{equation}
Indeed, if we take the partial derivative with respect to $\tau$ of both sides of \eqref{desitterhyperbsolution}, we have, as  $\tau \to 0$,
\begin{equation}\label{relationdesitter}
|\sinh(t) \partial_t \phi| =  |\partial_{\tau} \phi| \lesssim - \sinh{\tau} = \frac1{\sinh(t)},
\end{equation}
where we are assuming that the $L^\infty$ norms of $\psi_0$, $d\psi_0$ and $\psi_1$ are finite.
%
%
\subsection{Milne universe}
\label{subsection3.6}
We will now work with the scale factor
\begin{equation}
a(\tau ) = e^{\tau}.
\end{equation}
This model represents the Milne universe, as we can see by writing the scale factor as a function of the physical variable $t$:
\begin{equation}
t=\int e^{\tau} d\tau=e^{\tau},
\end{equation}
whence
\begin{equation}
a(t) = t
\end{equation}
for $t \in (0, +\infty)$.

With this choice, our wave equation \eqref{wavephi} becomes
\begin{equation}\label{milnewaveeqn}
\partial_{\tau}^2 \phi + 2 \partial_{\tau} \phi = \Delta \phi.
\end{equation}
To satisfy the constraint equations \eqref{def_fg_hyp}, we can pick $\alpha = \beta=1$ and $\kappa = 0$. In this case, the operator is 
\begin{equation}
\hat O \phi = e^{\tau} \phi.
\end{equation}
To get an explicit expression for solutions we consider the following initial data for $\hat O \phi$ at the initial conformal time $\tau=\tau_0$:
\begin{equation}\label{milneinitdata}
\begin{cases}
\hat O \phi(\tau_0, x) = e^{\tau_0} \phi_0(x) \eqqcolon \psi_0(x), \\
\partial_{\tau} \hat O \phi(\tau_0, x) = e^{\tau_0} \phi_0(x) + e^{2\tau_0} \phi_1(x) \eqqcolon \psi_1(x).
\end{cases}
\end{equation}
The solution in terms of spherical means is given by \eqref{hyperbolicsphericalmean2}:
\begin{empheq}[box=\fbox]{align} \label{milnesphericalmeans}
e^{\tau} \phi(\tau, x) &= \frac{\cosh(\tau-\tau_0)}{4\pi} \int_{S^2} \psi_0 \big (\exp_x((\tau-\tau_0)z) \big) \, dV_2(z)\, + \nonumber \\ 
&  + \frac{\sinh(\tau-\tau_0)}{4\pi} \int_{S^2} d \psi_0(c'_z(\tau-\tau_0)) \, dV_2(z)\, \\
&  + \frac{\sinh(\tau-\tau_0)}{4\pi} \int_{S^2} \psi_1 \big ( \exp_x((\tau-\tau_0)z) \big) \, dV_2(z). \nonumber
\end{empheq}
As could be expected from the fact that the Milne universe is a globally hyperbolic region of Minkowski's spacetime, $\phi$ satisfies the strong Huygens principle. Notice that the decay of the solution as $\tau \to +\infty$ is the same as that of the solution \eqref{hyperbradiationsphericalmeans} if we assume that the appropriate norms are finite. Thus, as $t \to +\infty$ we have
\begin{equation}
|\phi| \lesssim e^{-2\tau} = t^{-2}
\end{equation}
This decay rate can be understood by viewing the Milne universe as the region $t>|x|$ of Minkowski's spacetime and recalling the well-known estimate
\begin{equation}
|\phi(x,t)| \lesssim \frac1{(1+t+|x|)\sqrt{1+|t-|x||}}
\end{equation}
for solutions of the wave equation with compactly supported initial data (see for instance \cite{Sogge}).
%
%
\section{Spherical case}
\label{section4}
Finally, we consider the case $K=1$. The constraint equations \eqref{mustbe0} now give
\begin{equation}\label{operator_spherical}
\begin{cases}
f(\tau) = \kappa a, \\
g(\tau) = \alpha \cos(\tau) + \beta \sin(\tau), \\
\displaystyle 2 \frac{a^{\prime}}{a} g-2 g^{\prime}-f^{\prime \prime}+2\left(\frac{a^{\prime}}{a}\right)^{\prime} f - f = 0.
\end{cases}
\end{equation}
Again these conditions are satisfied for certain choices of $a(\tau)$, for which it is possible to obtain an explicit expression for $\phi$ from the expression of $\hat O \phi$ in terms of spherical means.
%
%
\subsection{Spherical means formula for $K=1$}
{\label{subsection4.1}
Let us find the solution of the conformally invariant wave equation \eqref{waveO} in the Einstein universe, that is, in the spacetime given by
\begin{equation}
g = -dt^2+d \varSigma_3^{\, 2},
\end{equation}
where $d\varSigma_3^{\, 2}$ is the line element for the $3$-sphere $S^3$. We therefore consider the Cauchy problem
\begin{equation} \label{wavespherical}
\begin{cases}
\partial_t^2 \phi = \Delta\phi-\phi, \\
\phi(0, x) = \phi_0(x), \\
\partial_t \phi(0, x) = \phi_1(x),
\end{cases}
\end{equation}
where $\Delta$ is the Laplacian operator in $S^3$. Following the exact same steps as in the hyperbolic case, but replacing the hyperbolic functions by their trigonometric counterparts, we arrive at the Kirchhoff-like formula
\begin{empheq}[box=\fbox]{align}  \label{sphericalsphericalmean2}
\phi(t, x) &= \frac{\cos(t)}{4 \pi} \int_{S^2} \phi_0 \big(\exp_x(tz) \big)\, dV_2(z) \nonumber \\
& + \frac{\sin(t)}{4 \pi} \int_{S^2} d\phi_0(\dot c_z(t)) \, dV_2(z)  \\
& + \frac{\sin(t)}{4 \pi} \int_{S^2} \phi_1 \big(\exp_x(tz) \big) \, dV_2(z), \nonumber
\end{empheq}
where again $c_z(t)=\exp_x(tz)$. It is interesting to note that $\phi$ is periodic in $t$ with period $2\pi$.
%
%
\subsection{Dust-filled spherical universe}
\label{subsection4.2}
Let us consider the case
\begin{equation}
a(\tau) = 1 -\cos(\tau)
\end{equation}
with $\tau \in (0, 2\pi)$. This corresponds to the dust-filled spherical model, since $a(\tau)$ solves Friedmann's equations for $K=1$, $\Lambda=0$ and $w = 0$ (see appendix~\ref{appendixA}).
With this choice of the scale factor, the wave equation \eqref{wavephi} becomes
\begin{equation}\label{waveeqn_sphericaldust}
\partial_{\tau}^2 \phi + \frac{2\sin(\tau)}{1-\cos(\tau)}\partial_{\tau} \phi = \Delta \phi.
\end{equation}
The constraint equations \eqref{operator_spherical} are satisfied if we choose $\kappa = 1$, $\alpha = 0$ and $\beta = \frac{3}{2}$, that is, if we choose
\begin{equation}
\hat O \phi =  (1-\cos(\tau)) \partial_{\tau}\phi + \frac{3}{2} \sin(\tau)\phi,
\end{equation}
which can also be written as
\begin{equation}\label{sphericaldust_tmpexpression}
\hat O \phi = \frac{2}{\sin\left( \frac{\tau}{2} \right)} \partial_{\tau} \left( \sin^3 \left(\frac{\tau}{2} \right) \phi \right),
\end{equation}
as remarked in \cite{KlainermanSarnak}. The initial data for $\hat O \phi$ at the initial conformal time $\tau=\tau_0$ is given by
\begin{equation} \label{initialdatasphericaldust}
\begin{cases}
\hat O \phi(\tau_0, x) = \frac{3}{2}\sin(\tau_0)\phi_0(x) + (1-\cos(\tau_0)) \phi_1(x) \eqqcolon \psi_0(x), \\
\partial_{\tau} \hat O \phi(\tau_0, x) = \frac{1}{2}\sin(\tau_0) (1-\cos(\tau_0))\phi_1(x) \\
\qquad \qquad \qquad +(1-\cos(\tau_0))\Delta \phi_0(x) + \frac{3}{2}\cos(\tau_0) \phi_0(x) \eqqcolon \psi_1(x),
\end{cases}
\end{equation}
where we used \eqref{waveeqn_sphericaldust} to write
\begin{equation}
\partial_{\tau} \hat O \phi = \frac{1}{2} \sin(\tau) (1-\cos(\tau)) \partial_t \phi + (1-\cos(\tau)) \Delta \phi + \frac{3}{2} \cos(\tau) \phi.
\end{equation}
Finally, the spherical means formula \eqref{sphericalsphericalmean2} and \eqref{sphericaldust_tmpexpression} give:
\begin{empheq}[box=\fbox]{align} \label{sphericalmeans_sphericaldust} 
\sin^3&\left(\frac{\tau}{2} \right) \phi(\tau, x) = \sin^3\left(\frac{\tau_0}{2}\right)\phi_0(x) \nonumber \\
& + \int_{\tau_0}^{\tau}  \frac{\cos(s-\tau_0)}{8 \pi} \sin\left( \frac{s}{2} \right) \int_{S^2} \psi_0 \big(\exp_x((s-\tau_0)z) \big)\, dV_2(z) ds \nonumber \\
& + \int_{\tau_0}^{\tau} \frac{\sin(s-\tau_0)}{8 \pi} \sin\left( \frac{s}{2} \right) \int_{S^2} d\psi_0(c_z'(s-\tau_0)) \, dV_2(z) ds \\
& + \int_{\tau_0}^{\tau} \frac{\sin(s-\tau_0)}{8 \pi} \sin\left( \frac{s}{2} \right) \int_{S^2} \psi_1 \big(\exp_x((s-\tau_0)z) \big) \, dV_2(z) ds. \nonumber
\end{empheq}
In general, these solutions will diverge as $\tau \to 0$ and as $\tau \to 2\pi$. As in the flat case, there exist solutions which have a well-defined limit at the Big Bang: if we take the limit $\tau_0 \to 0$ while keeping $\phi_0$, $\phi_1$ and $\tau$ fixed, we obtain from \eqref{initialdatasphericaldust} and \eqref{sphericalmeans_sphericaldust}  the limit solution
\begin{equation}
\phi\left (\tau, x \right) = \frac{3}{16\pi\sin^3\left( \frac{\tau}{2} \right)} \int_0^{\tau} \sin(s) \sin\left( \frac{s}{2} \right) \int_{S^2} \phi_0(\exp_x(zs)) dV_2(z) ds.
\end{equation}
Using L'H\^{o}pital's rule, one can easily check that
\begin{equation}
\lim_{\tau \to 0} \phi(\tau, x) =  \phi_0(x).
\end{equation}
Similarly, we can find solutions with a well-defined limit at the Big Crunch by taking the limit $\tau_0 \to 2\pi$ while keeping $\phi_0$, $\phi_1$ and $\tau$ fixed. In this case we obtain the limit solution
\begin{equation}
\phi\left (\tau, x \right) = \frac{3}{16\pi\sin^3\left( \frac{\tau}{2} \right)} \int_{2\pi}^{\tau} \sin(s) \sin\left( \frac{s}{2} \right) \int_{S^2} \phi_0(\exp_x(zs)) dV_2(z) ds,
\end{equation}
which, again using L'H\^{o}pital's rule, satisfies
\begin{equation}
\lim_{\tau \to 2\pi} \phi(\tau, x) =  \phi_0(x).
\end{equation}
Interestingly, if $\phi_0$ satisfies
\begin{equation}
\int_{0}^{2\pi} \sin(s) \sin\left( \frac{s}{2} \right) \int_{S^2} \phi_0(\exp_x(zs)) dV_2(z) ds = 0
\end{equation}
for all $x \in S^3$ then the two limit solutions coincide, yielding a solution with the same limit at the Big Bang and at the Big Crunch (this includes, of course, the constant solutions).
%
%
\subsection{Radiation-filled spherical universe}
\label{subsection4.3}
Next, let us take the scale factor
\begin{equation}
a(\tau)= \sin(\tau)
\end{equation}
with $\tau \in (0,\pi)$. This corresponds to a radiation-dominated spherical universe, since $a(\tau)=\sin(\tau)$ solves the Friedmann equations for $K=1$, $\Lambda=0$ and $w = \frac{1}{3}$ (see appendix~\ref{appendixA}).
With this choice of the scale factor, our wave equation \eqref{wavephi} becomes
\begin{equation}\label{waveeqn_sphericalrad}
\partial_{\tau}^2 \phi + 2 \cotan(\tau) \partial_{\tau} \phi = \Delta \phi.
\end{equation}
In this case, the constraint equations \eqref{def_fg_hyp} are satisfied if we choose $\alpha = \kappa = 0$ and $\beta = 1$, that is, if we choose
\begin{equation}
\hat O \phi = \sin(\tau) \phi.
\end{equation}
Again, this operator can be taken as multiplicative because in this universe the wave equation coincides with the conformally invariant wave equation.

The initial data for $\hat O \phi$ at the initial conformal time $\tau = \tau_0$ is given by:
\begin{equation} \label{initialdatasphericalradiation}
\begin{cases}
\hat O \phi(\tau_0, x) = \sin(\tau_0) \phi_0(x) \eqqcolon \psi_0(x), \\
\partial_{\tau} \hat O \phi(\tau_0, x) = \cos(\tau_0) \phi_0(x) + \sin^2(\tau_0) \phi_1(x) \eqqcolon \psi_1(x).
\end{cases}
\end{equation}
The spherical means solution is then given by \eqref{sphericalsphericalmean2}:
\begin{empheq}[box=\fbox]{align} \label{sphericalmeans_sphericalradiation}
\phi(\tau, x) &= \frac{\cos(\tau-\tau_0)}{4 \pi \sin(\tau)} \int_{S^2} \psi_0 \big(\exp_x((\tau-\tau_0)z) \big)\, dV_2(z) \nonumber \\
& + \frac{\sin(\tau-\tau_0)}{4 \pi \sin(\tau)} \int_{S^2} d\psi_0 \left(\dot c_z (\tau-\tau_0) \right) \, dV_2(z) \\
& + \frac{\sin(\tau-\tau_0)}{4 \pi \sin(\tau)} \int_{S^2} \psi_1 \big(\exp_x((\tau-\tau_0)z) \big) \, dV_2(z).\nonumber
\end{empheq}
As in the flat and hyperbolic cases, $\phi$ satisfies the strong Huygens principle.

Again, these solutions will in general diverge as $\tau \to 0$ and as $\tau \to \pi$, but there exist solutions which have a well-defined limit at the Big Bang: if we take the limit $\tau_0 \to 0$ while keeping $\phi_0$, $\phi_1$ and $\tau$ fixed, we obtain from \eqref{initialdatasphericalradiation} and \eqref{sphericalmeans_sphericalradiation} the limit solution
\begin{equation}
\phi(\tau, x)=\frac{1}{4\pi} \int_{S^2}\phi_0 \big(\exp_x(\tau z) \big) dV_2(z),
\end{equation}
which clearly converges to  $\phi_0(x)$ as $\tau \to 0$. Similarly, we can find solutions with a well-defined limit at the Big Crunch by taking the limit $\tau_0 \to \pi$ while keeping $\phi_0$, $\phi_1$ and $\tau$ fixed. In this case we obtain the limit solution
\begin{equation}
\phi(\tau, x)=\frac{1}{4\pi} \int_{S^2}\phi_0 \big(\exp_x((\tau-\pi) z) \big) dV_2(z),
\end{equation}
which converges to $\phi_0(x)$ as $\tau \to \pi$. Interestingly, if $\phi_0$ is an even function on the sphere $S^3$, then the two limit solutions coincide, yielding a solution with the same limit at the Big Bang and at the Big Crunch (this includes, of course, the constant solutions).
%
%
\subsection{Spherical de Sitter universe}
\label{subsection4.4}
Finally, we consider the case
\begin{equation}
a(\tau ) = \sec(\tau)
\end{equation}
with $\tau \in \left (-\frac{\pi}{2}, \frac{\pi}{2} \right)$. This corresponds to the spherical (full) de Sitter universe, since $a(\tau)= \sec(\tau)$ is a solution for the Friedmann equations for $K=1$, $\Lambda=3$ and $\rho_0= 0$  (see appendix~\ref{appendixA}). We can obtain the scale factor as a function of the physical time $t$ by noticing that
\begin{equation}
t= \int \frac{d\tau}{\cos(\tau)} = 2 \arctanh \left( \tan \left ( \frac{\tau}{2} \right) \right),
\end{equation}
and so
\begin{equation}
a(t) = \sec \left( 2 \arctan \left( \tanh\left( \frac{t}{2} \right) \right) \right) = \cosh (t).
\end{equation}

With this choice of the scale factor, \eqref{wavephi} becomes
\begin{equation}\label{sphericalbvacuumwaveeqn}
\partial_{\tau}^2 \phi + 2 \tan(\tau) \partial_{\tau} \phi = \Delta \phi.
\end{equation}
We can satisfy the constraint equations \eqref{operator_spherical} by choosing $\alpha = \beta = 0$ and $\kappa = 1$, so that we have
\begin{equation}
\hat O \phi = \sec(\tau) \partial_{\tau} \phi.
\end{equation}
 As before, we can express the solution in terms of spherical means. We have the following initial data for $\hat O \phi$ at the initial conformal time $\tau = \tau_0$:
\begin{equation}\label{sphericalbvacuuminitialdata1}
\begin{cases}
\hat O \phi(\tau_0, x) = \sec^2(\tau_0) \phi_1(x) \eqqcolon \psi_0(x), \\
\partial_{\tau} \hat O \phi(\tau_0, x) =  \sec(\tau_0) \Delta \phi_0(x) - \sec^2(\tau_0) \tan(\tau_0) \phi_1(x) \eqqcolon \psi_1(x),
\end{cases}
\end{equation}
where we used  \eqref{sphericalbvacuumwaveeqn} to write
\begin{equation}
\partial_{\tau} \hat O \phi = \sec(\tau) \Delta \phi - \sec^2(\tau)\tan(\tau)\partial_t \phi.
\end{equation}
The spherical means solution is then
\begin{empheq}[box=\fbox]{align} \label{solutionsphericaldS}
\phi(\tau, x) &=  \int_{\tau_0}^{\tau} \frac{\cos(s-\tau_0)}{4\pi} \cos(s) \int_{S^2} \psi_0 \big(\exp_x ((s-\tau_0)z) \big) \, dV_2(z) ds  \nonumber \\
&  + \int_{\tau_0}^{\tau} \frac{\sin(s-\tau_0)}{4\pi} \cos(s) \int_{S^2} d\psi_0(c'_z(s-\tau_0)) \, dV_2(z)  ds \\
&  + \int_{\tau_0}^{\tau} \frac{\sin(s-\tau_0)}{4\pi} \cos(s)\int_{S^2} \psi_1 \big (\exp_x ((s-\tau_0)z) \big ) \, dV_2(z) ds \nonumber \\
&  + \phi_0(x). \nonumber
\end{empheq}
In the particular case when $\phi_1 \equiv 0$ we obtain
\begin{align}
\phi(\tau, x) & = \phi_0(x) + \int_{0}^{\tau-\tau_0} \frac{\sin(r)}{4\pi} \cos(r+\tau_0)\int_{S^2} \sec(\tau_0) \Delta\phi_0 \big (\exp_x (rz) \big ) \, dV_2(z) dr \nonumber \\
& = \phi_0(x) + \frac{1}{4\pi} \int_{B_{\tau-\tau_0} (x)} \left( \cotan(r) - \tan(\tau_0) \right) \Delta\phi_0(y) dV_3(y),
\end{align}
where we defined $r(y)=\dists(y, x)$. Using the divergence theorem and Green's identities, together with $\Delta \cotan(r) = 0$, this expression can be rewritten as
\begin{align}
\phi(\tau, x) & = \frac{\cos(\tau)}{4 \pi \cos(\tau_0) \sin(\tau - \tau_0)} \int_{\partial B_{\tau-\tau_0} (x)} \nabla \phi_0(y) \cdot \nabla r(y) dV_2(y) \nonumber \\
& + \frac{1}{4 \pi \sin^2(\tau - \tau_0)} \int_{\partial B_{\tau-\tau_0} (x)} \phi_0(y) dV_2(y).
\end{align}
Therefore, $\phi(\tau, x)$ depends only on the values of $\phi_0$ and $\nabla \phi_0$ on $\partial B_{\tau-\tau_0} (x)$, that is, $\phi$ satisfies Yagdjian's incomplete Huygens principle.

As expected by analogy with the flat and hyperbolic de Sitter universes, the time derivative of the solution decays exponentially as $t \to +\infty$,
\begin{equation}
|\partial_t \phi| \lesssim e^{-2t}.
\end{equation}
Indeed, if we take the partial derivative with respect to $\tau$ of both sides of \eqref{solutionsphericaldS}, we have, as  $\tau \to \frac{\pi}2$,
\begin{equation}
|\cosh(t) \partial_t \phi| =  |\partial_{\tau} \phi| \lesssim \cos{\tau} = \frac1{\cosh(t)}.
\end{equation}
Similarly, one can show that the time derivative of the solution decays exponentially as $t \to -\infty$,
\begin{equation}
|\partial_t \phi| \lesssim e^{2t}.
\end{equation}

%
%
\section{Decay rates in flat and hyperbolic FLRW universes}\label{section6}
Let us consider the family of flat FLRW universes described by the scale factor $a(t)=t^p$, with $p \geq 0$. In this context, we obtained the Minkowski spacetime decay rate of $t^{-1}$ for the wave equation when $p=\frac{2}{3}$ (dust-filled universe, see \eqref{flatdecaydust}) and when $p=\frac{1}{2}$ (radiation-filled universe, see \eqref{flatdecayradiation}). The independence of these results on $p$ (as well as the numerical results in \cite{RossettiVinuales}) suggests a wider range of validity for this decay. 

\begin{Conjecture}[Decay in the flat case]\label{conjecture-1}
Let $\phi$ be a solution of the wave equation in a flat FLRW universe with scale factor $a(t)=t^p$, where $p \ge 0$. Assume that the initial data $\phi_0(x)\coloneqq \phi(t_0, x)$ and $\phi_1(x)\coloneqq\partial_t \phi(t_0, x)$ is sufficiently regular and belongs to appropriate Sobolev spaces. Then
\begin{equation}
\| \phi(t, \cdot) \|_{L^{\infty}(\mathbb{R}^3)} \lesssim \begin{cases}
(1+t)^{-1}, &0 \le p \le \frac23,\\
(1+t)^{-3(1-p)}, &\frac23 \le p < 1,
\end{cases}
\end{equation}
or, equivalently:
\begin{equation}
\| \phi(t, \cdot) \|_{L^{\infty}(\mathbb{R}^3)} \lesssim \begin{cases}
(1+t)^{-1}, & w \ge 0,\\
(1+t)^{-\frac{3w+1}{w+1}}, &-\frac13 < w \le 0.
\end{cases}
\end{equation}
Moreover, we have 
\[
\|\phi(t, \cdot)\|_{L^{\infty}(\bbR^3)} \lesssim (\log t)^{-\frac32} \quad \text{ for } \,\,  p=1 \,\, \left(\text{i.e. }w = -\frac13 \right) ,
\]
and there is no decay for $p > 1$ (i.e. $-1 < w < -\frac13$).
\end{Conjecture}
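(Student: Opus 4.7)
The plan is to pass to conformal time, reducing the wave equation on the FLRW background to a scale-invariant damped wave equation on Minkowski space, and then invoke the decay results of Wirth and Nishihara cited in the paper.

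For $a(t)=t^p$ with $p\in(0,1)$, the conformal time $\tau=t^{1-p}/(1-p)\propto t^{1-p}$ satisfies $a'(\tau)/a(\tau)=(p/(1-p))/\tau$, so \eqref{wavephi} becomes
\begin{equation*}
\partial_\tau^2 \phi + \frac{\mu}{\tau}\,\partial_\tau \phi - \Delta \phi = 0, \qquad \mu := \frac{2p}{1-p},
\end{equation*}
which is the scale-invariant damped wave equation on $\mathbb{R}^3$ analysed by Wirth. At the critical value $p=1$ ($w=-\tfrac13$) the conformal time degenerates to $\tau=\log t$ and the same manipulation yields the constantly damped equation $\partial_\tau^2 \phi + 2\,\partial_\tau \phi - \Delta \phi = 0$, covered by Nishihara's diffusion phenomenon.

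I would then invoke the Wirth dichotomy in three dimensions, whose threshold sits at $\mu=4$. For $0\le\mu\le 4$ (equivalently $0\le p\le 2/3$, $w\ge 0$) the Liouville substitution $\phi=\tau^{-\mu/2}v$ gives $v_{\tau\tau}-\Delta v + (\mu(2-\mu)/(4\tau^2))v=0$ with a subcritical scale-invariant potential, so that $v$ inherits the free-wave dispersive decay $\tau^{-1}$ and $\|\phi(\tau,\cdot)\|_{L^\infty}\lesssim \tau^{-\mu/2-1}$. Using $\mu/2+1=1/(1-p)$ and $\tau\propto t^{1-p}$, this translates to $\|\phi(t,\cdot)\|_{L^\infty}\lesssim t^{-1}$, i.e.\ the Minkowski rate. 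For $\mu>4$ (equivalently $2/3<p<1$, $-1/3<w<0$) the effective-damping regime takes over: the parabolic limit $(\mu/\tau)\phi_\tau=\Delta\phi$ dominates at large $\tau$, and the time change $s=\tau^2/(2\mu)$ reduces it to the three-dimensional heat equation, yielding $\|\phi\|_{L^\infty}\lesssim s^{-3/2}\sim\tau^{-3}$, i.e.\ $t^{-3(1-p)}$ in physical time. The two expressions agree at the crossover $\mu=4$, $p=2/3$, as required by continuity.

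For $p=1$, Nishihara's diffusion phenomenon for the three-dimensional damped wave equation gives $\|\phi\|_{L^\infty}\lesssim\tau^{-3/2}=(\log t)^{-3/2}$. For $p>1$ (i.e.\ $-1<w<-\tfrac13$) the conformal time is bounded and a cosmological horizon forms, preventing dispersive decay; this is precisely the content of \cite{CostaNatarioOliveira}. The main obstacle in the whole argument is applying the Wirth dichotomy sharply at $\mu=4$ and confirming the effective-regime exponent $\tau^{-3}$ rather than some intermediate one; this amounts to matching the Sobolev hypotheses on $(\phi_0,\phi_1)$ with those in the cited works, and reconciling the dispersive estimates available for the Liouville-transformed Klein-Gordon equation with the heat-kernel bounds that govern the parabolic limit. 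A secondary technical point is the careful handling of the critical case $p=1$, where the conformal reparametrization degenerates logarithmically.
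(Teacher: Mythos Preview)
Your proposal is correct and follows essentially the same route as the paper: reduce to conformal time to obtain the scale-invariant damped wave equation $\partial_\tau^2\phi+\frac{\mu}{\tau}\partial_\tau\phi-\Delta\phi=0$ with $\mu=\frac{2p}{1-p}$, then invoke Wirth's $L^1$--$L^\infty$ dichotomy (threshold $\mu=4$, i.e.\ $p=\tfrac23$) and, at $p=1$, Nishihara's diffusion phenomenon for the constantly damped equation; the no-decay case $p>1$ is handed off to \cite{CostaNatarioOliveira}. The only addition over the paper is your heuristic explanation of Wirth's two regimes via the Liouville substitution and the parabolic limit --- useful intuition, but note that the potential $\mu(2-\mu)/(4\tau^2)$ changes sign at $\mu=2$, so the ``subcritical Klein--Gordon'' picture does not literally cover the full range $0\le\mu\le4$; the rigorous statement is precisely Wirth's theorem, which the paper simply cites.
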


The absence of decay for $p>1$ (but {\em not} for $p=1$) follows from the results in \cite{CostaNatarioOliveira}). For $0 \le p \le 1$ we have the following result:

\begin{Thm}
Conjecture~\ref{conjecture-1} holds for $0 \le p \le 1$ (i.e. $w \ge -\frac13$).
\end{Thm}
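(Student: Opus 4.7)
The plan is to translate the statement into a decay estimate for a scale-invariant damped wave equation in conformal time and then invoke the $L^p$-$L^q$ theory of \cite{Wirth, Nishihara}.

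\textbf{First step.} For $p \in [0, 1)$, introduce the conformal time $\tau = t^{1-p}/(1-p)$, so that the wave equation \eqref{wavephi} assumes the scale-invariant form
\[
\phi_{\tau\tau} + \frac{\mu}{\tau}\phi_{\tau} - \Delta\phi = 0, \qquad \mu := \frac{2p}{1-p}.
\]
For $p=1$ the conformal time is $\tau = \log t$ and the equation becomes the constant-coefficient damped wave equation $\phi_{\tau\tau} + 2\phi_{\tau} - \Delta\phi = 0$.

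\textbf{Second step.} Apply the decay estimates from \cite{Wirth} for this scale-invariant equation, which in $\mathbb{R}^3$ yield, for initial data in the appropriate Sobolev and $L^1$ spaces,
\[
\|\phi(\tau, \cdot)\|_{L^{\infty}(\mathbb{R}^3)} \lesssim \tau^{-\min\left(1 + \frac{\mu}{2},\, 3\right)}.
\]
The exponent $1+\mu/2$ arises from the dispersive rate of the wave operator, made transparent by the substitution $\phi = \tau^{-\mu/2}u$, which transforms the equation into a Klein-Gordon equation with the scale-invariant potential $\mu(2-\mu)/(4\tau^2)$; its solutions decay at the standard three-dimensional wave rate $\tau^{-1}$. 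The exponent $3$ is the effective-dissipation/heat-kernel rate associated with the parabolic limit $(\mu/\tau)\phi_\tau = \Delta\phi$, whose fundamental solution, after the reparametrization $s = \tau^2/(2\mu)$, is Gaussian with sup-norm decay $s^{-3/2} \sim \tau^{-3}$. The crossover $1+\mu/2 = 3$ occurs precisely at $\mu = 4$, i.e., $p = 2/3$, in agreement with the conjecture.

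\textbf{Third step.} Convert the decay back to physical time via $\tau \propto t^{1-p}$. The elementary identity $(1-p)(1+\mu/2) = (1-p) + p = 1$ gives
\[
\|\phi(t, \cdot)\|_{L^{\infty}(\mathbb{R}^3)} \lesssim t^{-1} \qquad \text{for } 0 \le p \le \tfrac{2}{3},
\]
while for the complementary range
\[
\|\phi(t, \cdot)\|_{L^{\infty}(\mathbb{R}^3)} \lesssim t^{-3(1-p)} \qquad \text{for } \tfrac{2}{3} \le p < 1.
\]
For $p=1$, the constant-coefficient damped wave equation is controlled by Nishihara's diffusion phenomenon, which gives $\|\phi(\tau, \cdot)\|_{L^{\infty}(\mathbb{R}^3)} \lesssim \tau^{-3/2}$ and hence $(\log t)^{-3/2}$ under $\tau = \log t$.

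\textbf{Main obstacle.} The delicate point is the uniformity of the bound $\tau^{-\min(1+\mu/2,\,3)}$ across the whole range $\mu \ge 0$, especially through the regime $\mu \in (2, 4)$ (i.e., $p \in (1/2, 2/3)$), where the auxiliary Klein-Gordon equation has a tachyonic potential and its indicial equation $\alpha(\alpha-1) = \mu(\mu-2)/4$ admits a growing mode $\tau^{\mu/2}$. One must verify that this growing mode is not excited by dispersive initial data in the relevant Sobolev/$L^1$ class, which is precisely what the Wirth-Nishihara analysis provides, and similarly that the transition from the dispersive regime to the heat-kernel regime at $\mu = 4$ is covered by a single uniform estimate.
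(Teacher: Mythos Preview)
Your proposal is correct and follows essentially the same route as the paper: rewrite the wave equation in conformal time as the scale-invariant damped wave equation $\phi_{\tau\tau}+\frac{\mu}{\tau}\phi_\tau-\Delta\phi=0$ with $\mu=\frac{2p}{1-p}$, invoke Wirth's $L^1$--$L^\infty$ estimate $\|\phi(\tau,\cdot)\|_{L^\infty}\lesssim (1+\tau)^{-\min(1+\mu/2,\,3)}$ for $0<p<1$, and appeal to Nishihara's diffusion phenomenon for the constant-coefficient case $p=1$. Your additional heuristics (the substitution $\phi=\tau^{-\mu/2}u$ and the parabolic limit) and the discussion of the tachyonic regime are not in the paper's proof, but they are commentary rather than a different argument; the ``main obstacle'' you flag is precisely what the cited Wirth theorem already settles.
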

\begin{proof} 
The case $p=0$ corresponds to the wave equation in Minkowski's spacetime. For $0 < p <1$, the original wave equation \eqref{wavephi} (written in conformal time) can be seen as a damped wave equation in Minkowski's spacetime:
\begin{equation}
\square \phi - \frac{\nu}{\tau} \partial_{\tau}\phi = 0,
\end{equation}
where $\Box = -\partial^2_{\tau} + \Delta$ and 
\begin{equation}
\nu \coloneqq \frac{2p}{1-p}.
\end{equation}
This equation was studied in \cite{Wirth}, theorem 3.5 (which can be extended to the $L^1$--$L^{\infty}$ case, as stressed in remark 3.3), where it was proved that 
\[
\| \phi(\tau, \cdot) \|_{L^{\infty}(\mathbb{R}^3)} \lesssim \begin{cases}
(1 + \tau)^{-1-\frac{\nu}{2}} \sim (1+t)^{-1}, &0 < p \le \frac23,\\
(1+\tau)^{-3} \sim (1+t)^{-3(1-p)}, &\frac23 \le p < 1.
\end{cases} 
\]
When $p=1$, the respective damped equation has constant coefficients. By a simple rescaling: $\tilde \phi(\tau, x) \coloneqq \phi\left(\frac{\tau}{2}, \frac{x}{2}\right)$, we have
\[
\square\tilde \phi -\partial_{\tau} \tilde \phi = 0.
\]
The diffusive structure of the above PDE was explicitly described in \cite{Nishihara}, where the decay
\[
\|\tilde \phi\|_{L^{\infty}(\bbR^3)} \lesssim \tau^{-\frac32} \sim (\log t)^{-\frac32}
\]
was established.
\end{proof}
A related conjecture for the hyperbolic case is motivated by the decay rates obtained in sections~\ref{subsection3.2} and \ref{subsection3.3}, and by the numerical results in \cite{RossettiVinuales}.

\begin{Conjecture}[Decay in the hyperbolic case]
Let $\phi$ be a solution of the wave equation in a hyperbolic FLRW universe, with scale factor satisfying the Friedmann equations \eqref{friedmann1} with zero cosmological constant and equation of state \eqref{eqnofstate}. Assume that the initial data $\phi_0(x)\coloneqq \phi(t_0, x)$ and $\phi_1(x)\coloneqq\partial_t \phi(t_0, x)$ is sufficiently regular and belongs to appropriate Sobolev spaces. Then
\begin{equation}
\| \phi(t, \cdot) \|_{L^{\infty}(\mathbb{H}^3)} \lesssim  \begin{cases}
(1+t)^{-2}, & w \ge \frac13,\\
(1+t)^{-\frac{3(w+1)}{2}}, &0 \le w \le \frac13.
\end{cases}
\end{equation}
Moreover, the decay is slower than $(1+t)^{-\frac{3(w+1)}{2}}$ for $-\frac13 < w < 0$, and there is no decay for $-1 \le w < -\frac13$. 
\end{Conjecture}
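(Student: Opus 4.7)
The plan is to reduce the problem to a Klein--Gordon-type equation on the ultrastatic background $\bbR \times \mathbb{H}^3$ by conformal rescaling and then analyze the resulting time-dependent potential using the Friedmann equations. Setting $\tilde\phi = a(\tau) \phi$ and using \eqref{wavephi}, a direct computation gives
\begin{equation*}
\partial_\tau^2 \tilde\phi - \Delta_{\mathbb{H}^3} \tilde\phi - V(\tau) \tilde\phi = 0, \qquad V(\tau) \coloneqq \frac{a''(\tau)}{a(\tau)}.
\end{equation*}
Using the Friedmann equation $(a'/a)^2 = 1 + C / a^{1+3w}$ (appendix~\ref{appendixA}) and differentiating, one obtains
\begin{equation*}
V(\tau) = 1 + \frac{C(1-3w)}{2\, a(\tau)^{1+3w}},
\end{equation*}
and, for $w > -\frac13$, $a(\tau) \sim C_0 e^{\tau}$ as $\tau \to +\infty$ (so that $a(t) \sim t$, Milne-like asymptotics), with $V(\tau) \to 1$. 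Thus the hyperbolic FLRW wave equation becomes a perturbation of the conformally coupled ultrastatic equation studied in section~\ref{subsection3.1}.

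The first regime, $w \geq \frac13$, should be the most tractable. Here $V \leq 1$, so $-\Delta_{\mathbb{H}^3} - V \geq 1 - V \geq 0$ by the spectral gap $-\Delta_{\mathbb{H}^3} \geq 1$. I would combine the spherical means representation \eqref{hyperbolicsphericalmean2} with the pointwise bound $|V - 1| \lesssim e^{-(1+3w)\tau}$ (which is integrable in $\tau$) to run a Duhamel/Gr\"onwall iteration starting from the radiation case $V \equiv 1$ treated in section~\ref{subsection3.3}. This should yield $|\tilde\phi(\tau, \cdot)|_{L^\infty} \lesssim e^{-\tau}$ for sufficiently regular, decaying initial data, and hence $|\phi| \lesssim e^{-\tau}/a \sim e^{-2\tau} \sim t^{-2}$, matching the Milne and radiation rates.

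The intermediate regime $0 \leq w < \frac13$ is the hard part. Here $V > 1$, so the effective mass is tachyonic and the spectral argument above fails; moreover the conjectured rate $t^{-3(w+1)/2}$ interpolates between the dust rate $t^{-3/2}$ (section~\ref{subsection3.2}) and the radiation rate $t^{-2}$ (section~\ref{subsection3.3}) in a way that is not captured by naive perturbation off the $V=1$ equation. Two approaches look promising. One is to search for higher-order Klainerman--Sarnak-type intertwiners $\hat O = \sum_{k=0}^{N} f_k(\tau) \partial_\tau^k$ generalizing the first-order operators of sections~\ref{subsection3.2}--\ref{subsection3.3}, using the fact that $a(\tau)$ for the general-$w$ fluid is given implicitly by Friedmann; the existence of such intertwiners would again convert the problem to the ultrastatic conformal wave equation and produce the rate by the same mechanism as in the two explicit cases. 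The second approach is a WKB/parametrix analysis of $\partial_\tau^2 \tilde\phi - \Delta_{\mathbb{H}^3} \tilde\phi - (1 + \epsilon(\tau))\tilde\phi = 0$ with $\epsilon(\tau) \sim e^{-(1+3w)\tau}$, extracting from the leading amplitude equation the additional factor $a^{(3w-1)/2}$ that accounts for the discrepancy between $e^{-\tau}/a$ (Milne decay) and $t^{-3(w+1)/2}$. The main obstacle is precisely this quantitative control: proving the sharp power in the tachyonic regime without an exact intertwiner.

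For $-\frac13 < w < 0$ the same framework applies, with $\epsilon(\tau) > 0$ and $|\epsilon(\tau)| \to 0$ strictly more slowly than $e^{-\tau}$; the slower-than-conjectured decay should follow from a lower bound on $V-1$ preventing the perturbative iteration from closing at the rate $t^{-3(w+1)/2}$. For $w = -\frac13$ we have $V \equiv 1 + C > 1$ \emph{constant}, so $-\Delta_{\mathbb{H}^3} - V$ has strictly negative spectrum and one obtains exponentially growing modes of $\tilde\phi$; choosing such a mode one produces solutions $\phi = \tilde\phi / a$ that do not decay, giving the absence of decay. For $-1 < w < -\frac13$, $a^{1+3w} \to 0$ and $V(\tau) \to +\infty$, which should similarly preclude any decay; here it may be more efficient to work directly with the physical-time equation and exhibit non-decaying sub-solutions driven by the accelerating expansion.
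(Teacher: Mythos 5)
First, a point of order: the statement you are trying to prove is labelled a \emph{Conjecture} in the paper, and the paper offers no proof of it --- it is motivated only by the two exactly solvable cases $w=0$ and $w=\tfrac13$ (sections~\ref{subsection3.2} and \ref{subsection3.3}) and by numerics. So there is no ``paper proof'' to compare against, and your proposal should be judged as a standalone argument. Your reduction is correct and is a natural hyperbolic analogue of what the paper does in the flat case via Wirth's damped wave equation: with $\tilde\phi=a\phi$ one indeed gets $\partial_\tau^2\tilde\phi-\Delta_{\mathbb{H}^3}\tilde\phi-V\tilde\phi=0$ with $V=a''/a$, and the Friedmann equation $(a'/a)^2=1+Ca^{-(1+3w)}$ does give $V=1+\tfrac{C(1-3w)}{2}a^{-(1+3w)}$ with $a\sim C_0e^\tau$ for $w>-\tfrac13$. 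This is a sensible framework.

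However, the proposal is not a proof, and the gap is exactly where the conjecture has content. In the regime $0\le w<\tfrac13$, which produces the nontrivial interpolating rate $t^{-3(w+1)/2}$, you yourself concede that neither of your two suggested routes (higher-order intertwiners, WKB parametrix) is carried out; this is the heart of the conjecture and it remains open in your write-up. Even in the regime $w\ge\tfrac13$ the Duhamel/Gr\"onwall step is not innocuous: the inhomogeneous spherical-means propagator on $\mathbb{H}^3$ controls $L^\infty$ of the solution by $L^1$-type norms of the source over geodesic spheres, and since the support of $\tilde\phi$ and the volume of spheres both grow like $e^{2\tau}$ while $|\tilde\phi|\lesssim e^{-\tau}$, the source term $(V-1)\tilde\phi$ contributes roughly $\int e^{-(\tau-s)}e^{s}e^{-(1+3w)s}\,ds=e^{-\tau}\int e^{(1-3w)s}\,ds$, which is exactly borderline at $w=\tfrac13$; closing the bootstrap needs a more careful weighted estimate than you indicate. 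Finally, your claim of non-decay at $w=-\tfrac13$ is both outside the conjecture (which deliberately excludes $w=-\tfrac13$ from the ``no decay'' range $-1\le w<-\tfrac13$) and appears to be wrong: the spectrum of $-\Delta_{\mathbb{H}^3}$ starts at $1$, so with $V\equiv 1+C$ the fastest growth available to $\tilde\phi$ is $e^{\sqrt{C}\tau}$ (from generalized eigenfunctions at the bottom of the spectrum), while $a=e^{\sqrt{1+C}\,\tau}$, so $\phi=\tilde\phi/a$ still decays like $e^{(\sqrt{C}-\sqrt{1+C})\tau}$; the spectral-gap mechanism you invoke does not produce a non-decaying solution there.
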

%
%
\section*{Acknowledgements}
We thank Jo\~ao Costa and  Alex Va\~{n}\'{o}-Vi\~{n}uales for many useful discussions. This work was partially supported by FCT/Portugal through CAMGSD, IST-ID, projects UIDB/04459/2020 and UIDP/04459/2020. Flavio Rossetti was supported by FCT/Portugal through the PhD scholarship UI/BD/152068/2021.
%
%
\appendix
%
%
\section{The Friedmann equations}\label{appendixA}
Meaningful choices of the scale factor $a(t)$ in the FLRW metric \eqref{flrwmetric3} are obtained by solving the Friedmann equations, which result from the Einstein equations with an ideal fluid source.
In three spatial dimensions, these are given by (see for instance \cite{ChenGibbons}):
\begin{equation} \label{friedmann1}
\left ( \frac{\dot a}{a} \right)^2 = \frac{8 \pi}{3} \rho_m + \frac{\Lambda}3 - \frac{K}{a^2} \qquad \text{ and} \qquad \frac{\ddot a}{a} = -\frac{4 \pi}{3}(\rho_m + 3p_m) + \frac{\Lambda}3,
\end{equation}
where $K$ is the curvature of the spatial sections, $\rho_m$ is the fluid's energy density, $p_m$ is the fluid's pressure, and $\Lambda$ is the cosmological constant. If we assume that the fluid satisfies the linear equation of state
\begin{equation} \label{eqnofstate}
p_m = w \rho_m,
\end{equation} 
where $w$ is a constant (the square of the fluid's speed of sound), then \eqref{friedmann1} can be rewritten as
\begin{equation}\label{friedmann1finalconf}
\left ( \frac{ a'}{a^2} \right)^2 = \frac{8 \pi}{3} \rho_0 a^{-3(1+w)} + \frac{\Lambda}{3} - \frac{K}{a^2},
\end{equation}
where $\rho_0 \geq 0$ is an integration constant and the derivatives are now taken with respect to the conformal time.
Interesting values of $w$ are given in table~\ref{aa1}.

\setlength{\tabcolsep}{40pt}
\begin{table}[H]
{\rowcolors{2}{white}{gray!10}
\begin{tabular}{ c | c } \toprule
 $w$ & fluid type \\[.4em]
 \hline
 $w<0$ & unphysical (imaginary speed of sound) \\
 $w=0$ & dust \\  
 $w=\frac{1}{3}$ & radiation \\
 $w=1$ & stiff fluid \\
 $w>1$ & unphysical (superluminal speed of sound) \\
\bottomrule
\end{tabular}
}
\caption{Fluid types for different values of $w$}\label{aa1}
\end{table}
\bigskip

If $\rho_0>0$, $\Lambda=0$, $K=0$ and $w \neq -\frac13$, the solutions of \eqref{friedmann1finalconf} are given by
\begin{equation} \label{flatexactsol}
a(\tau) = \tau^{\frac{2}{1+3w}}
\end{equation}
for an appropriate choice of $\rho_0$ (amounting to a choice of units). In this work we consider the following scale factors:
\begin{equation}
a(\tau) = \begin{cases}
\tau^2 &\text{(dust),}\\
\tau & \text{(radiation)},\\
\tau^{-1} &\text{(de Sitter)}. 
\end{cases}
\end{equation}
The de Sitter universe corresponds to $\rho_0=0$ and $\Lambda > 0$ (here we chose units such that $\Lambda = 3$).

If $\rho_0>0$, $\Lambda=0$, $K=-1$ and $w \neq -\frac13$, the solutions of \eqref{friedmann1finalconf} are given by
\begin{equation} \label{hyperbexactsol}
a(\tau) = 
\left [ \sinh\left(\frac{1}{2}(3w+1) \tau \right) \right]^{\frac{2}{3w+1}}
\end{equation}
for an appropriate choice of $\rho_0$. In particular, we consider the following scale factors:
\begin{equation}
a(\tau) =
\begin{cases}
\cosh(\tau)-1 & \text{(dust),}\\
\sinh(\tau) & \text{(radiation),}\\
\sech(\tau) &\text{(anti-de Sitter)}, \\
-\cosech(\tau) &\text{(de Sitter)}, \\
e^{\tau} &\text{(Milne universe)}.
\end{cases}
\end{equation}
The anti-de Sitter, de Sitter and Milne universes correspond to the cases where $\rho_0 = 0$ and $\Lambda = -3$, $\Lambda = 3$ and $\Lambda = 0$, respectively.

Finally, if $\rho_0>0$, $\Lambda=0$, $K=1$ and $w \neq -\frac13$, the solutions of \eqref{friedmann1finalconf} are given by
\begin{equation} 
a(\tau) = 
\left [ \sin\left(\frac{1}{2}(3w+1) \tau \right) \right]^{\frac{2}{3w+1}}
\end{equation}
for an appropriate choice of $\rho_0$. In particular, we consider:
\begin{equation}
a(\tau) = 
\begin{cases}
1-\cos(\tau) &\text{(dust)}, \\
\sin(\tau) &\text{(radiation)}, \\
\sec(\tau) &\text{(de Sitter)}.
\end{cases}
\end{equation}
Again, the de Sitter universe corresponds to $\rho_0 = 0$ and $\Lambda = 3$.
%
%
%
\section{Decay for the dust-filled hyperbolic universe}\label{appendixB}
Let us show that the decay \eqref{hypdecaydust} is optimal by fixing a point $\bar x \in \mathbb{H}^3$ and choosing initial data such that $|\phi(\tau, \bar x)| \sim e^{-\frac{3}{2}\tau}$. This can be done as follows: given $\tau_0 > 0$, we choose the functions $\phi_0$ and $\phi_1$ such that $\psi_0 \equiv 0$. From \eqref{initialdatahyperbdust}, this implies that
\begin{equation}
\psi_1(x) = A \phi_0(x) + B \Delta \phi_0(x)
\end{equation}
with $A = \frac{3}{2} \left( \cosh(\tau_0) - \frac{1}{2} \sinh^2(\tau_0) \right) $ and $B = \cosh(\tau_0)-1$. 
Next, we choose the radially symmetric function $\phi_0(x) = \varphi(\disth(x, \bar x))$, where
\begin{equation}\label{phi0def}
\varphi(r) \coloneqq \begin{cases}
N r^3 \exp \left( \frac{1}{r-1} \right), &0 \le r < 1, \\
0, & r \ge 1,
\end{cases}
\end{equation}
and $N > 0$ is a free parameter. Note that $\varphi$ is smooth in $(0,+\infty)$, and moreover that its behavior at $r=0$ implies that $\phi_0$ is of class $C^2$.
Plugging this choice of $\phi_0$ into the right-hand side of the spherical means solution \eqref{hyperbdustsolution} evaluated at $\bar x$, we obtain
\begin{align}\label{fdef}
&\int_{\tau_0}^{\tau} \frac{\sinh(s-\tau_0)}{8 \pi } \sinh\left (\frac{s}{2} \right ) \int_{S^2} \psi_1 \big(\exp_{\bar x}((s-\tau_0)z) \big) \, dV_2(z)ds  \\
&= \frac{1}{2} \int_0^{\tau-\tau_0} \sinh(r) \sinh \left( \frac{r+\tau_0}{2} \right) \big (A \varphi(r)+B\varphi''(r)+2B \cotanh(r) \varphi'(r) \big) dr, \nonumber
\end{align}
where we used \eqref{laplacianhyperb} and the substitution $r=s-\tau_0$. By \eqref{phi0def}, the above integral is constant for $\tau > \tau_0+1$. We can choose $\tau_0$ such that this constant is non-zero\footnote{For $\tau_0=1$,  $\tau=2$ and $N=50$, the evaluation of the integral in \eqref{fdef} using Mathematica gives 0.151503, and this value grows linearly with $N$.}, and so, after this choice of $\phi_0$ and $\phi_1$, \eqref{hyperbdustsolution} implies
\begin{equation}
\phi(\tau, \bar x)  = C \sinh^{-3} \left( \frac{\tau}{2} \right )  \sim e^{-\frac{3}{2} \tau} \text{ as } \tau \to +\infty.
\end{equation}
Since $C^2$ functions on a manifold can be approximated by smooth functions (Whitney approximation theorem, see e.g. \cite{Lee}), we can choose smooth initial data that approximates the chosen $\phi_0$ and extend this construction to smooth initial data. The above reasoning disproves the decay rate of $e^{-2\tau}$ in the conformal time variable stated by Abbasi and Craig in \cite{AbbasiCraig}. Indeed, the inequality proved in theorem 3.1 of their article seems to hold specifically for the term they choose, but not for the last terms in their solutions (3.9) and (3.10). 
%

\end{document}